\documentclass[11pt]{article}

\setcounter{tocdepth}{3}

\usepackage{amssymb,amsmath,amsfonts}
\usepackage{amsthm}
\usepackage{subfig}
\usepackage{times,inconsolata}
\usepackage{fullpage}
\usepackage{latexsym}
\usepackage{verbatim}
\usepackage{enumerate}
\usepackage{graphicx}
\usepackage{url}
\usepackage[ruled,linesnumbered,noend]{algorithm2e}
\usepackage{color}


\newcommand{\Ch}{\Children}
\newcommand{\Children}{\ensuremath{{\mathrm{Ch}}}}

\newcommand{\Cl}{\ensuremath{{\mathrm{Cl}}}} 		

\newcommand{\MP}{\ensuremath{M_\P}} 
\newcommand{\incompatible}{\texttt{incompatible}}
\newcommand{\Uinit}{\ensuremath{U_\mathrm{root}}} 
\newcommand{\Yinit}{\ensuremath{Y_\mathrm{root}}} 
\newcommand{\Desc}{\ensuremath{\mathrm{Desc}}} 

\newcommand{\DG}{\ensuremath{H_\P}} 
\newcommand{\GBNT}{\ensuremath{G_{\mathtt{BNT}}}} 

\newcommand{\SLBuild}{\ensuremath{\texttt{BuildNT}}} 

\newcommand{\InitLists}{\ensuremath{\texttt{Initialize}}}
\newcommand{\DelLabel}{\ensuremath{\texttt{Delete}}}


\renewcommand{\P}{\ensuremath{\mathcal{P}}}

\newcommand{\T}{\ensuremath{\mathcal{T}}}



\newcommand{\cnt}{\ensuremath{\mathtt{count}}} 
\newcommand{\SEMI}{\ensuremath{\mathtt{semiU}}} 
\newcommand{\WEIGHT}{\ensuremath{\mathtt{weight}}} 
\newcommand{\NULL}{\ensuremath{\mathtt{null}}} 
\newcommand{\MAP}{\ensuremath{\mathtt{map}}} 

\newtheorem{theorem}{Theorem}
\newtheorem{lemma}{Lemma}

\newtheorem{observation}{Observation}

\theoremstyle{definition}

\SetKw{continue}{continue}

\begin{document}
\title{Fast Compatibility Testing for Phylogenies with Nested Taxa\thanks{Supported in part by the National Science Foundation under grant CCF-1422134.}}
\author{
Yun Deng\thanks{Department of Computer Science, Iowa State University, Ames, IA 50011, USA, {\tt yundeng@iastate.edu}}
\and
David Fern\'{a}ndez-Baca\thanks{Department of Computer Science, Iowa State University, Ames, IA 50011, USA, {\tt fernande@iastate.edu}. 
}
}

\date{\empty}

\maketitle

\begin{abstract}
Semi-labeled trees are phylogenies whose internal nodes may be labeled by higher-order taxa.  Thus, a leaf labeled \emph{Mus musculus} could nest within a subtree whose root node is labeled Rodentia, which itself could nest within a subtree whose root is labeled Mammalia.  
Suppose we are given collection $\P$ of semi-labeled trees over various subsets of a set of taxa.  The ancestral compatibility problem asks whether there is a semi-labeled tree $\T$ that respects the clusterings and the ancestor/descendant relationships implied by the trees in $\P$. 
%
We give a $\tilde{O}(\MP)$ algorithm for the ancestral compatibility problem, where $\MP$ is the total number of nodes and edges in the trees in $\P$.  Unlike the best previous algorithm, the running time of our method does not depend on the degrees of the nodes in the input trees.  
\end{abstract}

\section{Introduction}

In the \emph{tree compatibility problem}, we are given a collection $\P = \{\T_1, \ldots, \T_k\}$ of rooted phylogenetic trees with partially overlapping taxon sets. $\P$ is called a \emph{profile} and the trees in $\P$ are the \emph{input trees}.  The question is whether there exists a tree $\T$ whose taxon set is the union of the taxon sets of the input trees, such that $\T$ exhibits the clusterings implied by the input trees.  That is, if two taxa are together in a subtree of some input tree, then they must also be together in some subtree of $\T$. 
The tree compatibility problem has been studied for over three decades \cite{Aho81a,DengFB2016,HenzingerKingWarnow99,Steel92}.

In the original version of the tree compatibility problem, only the leaves of the input trees are labeled. 
Here we study a generalization, called \emph{ancestral compatibility}, in which taxa may be \emph{nested}.  That is, the internal nodes may also be labeled; these labels represent \emph{higher-order taxa}, which are, in effect, sets of taxa.  Thus, for example,   
an input tree may contain the taxon \emph{Glycine max} (soybean) nested within a subtree whose root is labeled Fabaceae (the legumes), itself nested within an Angiosperm subtree.  Note that leaves themselves may be labeled by higher-order taxa.  The question now is whether there is a tree $\T$  whose taxon set is the union of the taxon sets of the input trees, such that $\T$ exhibits not only the clusterings among the taxa, but also the ancestor/descendant relationships among taxa in the input trees.  Our main result is a $\tilde{O}(\MP)$ algorithm for the compatibility problem for trees with nested taxa, where $\MP$ is the total number of nodes and edges in the trees in $\P$.

\vspace{-1.5\parsep}

\paragraph{Background.}  The tree compatibility problem is a basic special case of the \emph{supertree problem}.
A supertree method is a way to synthesize a collection of phylogenetic trees with partially overlapping taxon sets into a single supertree that represents the information in the input trees.  The supertree approach, proposed
in the early 90s \cite{Baum:1992,Ragan:1992}, has been used successfully to build large-scale phylogenies \cite{BinindaEmonds:Nature:07}. 

The original supertree methods were limited to input trees where only the leaves are labeled.
Page \cite{Page2004} was among the first to note the need to handle phylogenies where internal nodes are labeled, and taxa are nested.  A major motivation is the desire to incorporate \emph{taxonomies} as input trees in large-scale supertree analyses, as way to circumvent one of the obstacles to building comprehensive phylogenies: the limited taxonomic overlap among different phylogenetic studies \cite{Sanderson:2008}. Taxonomies group organisms according to a system of taxonomic rank  (e.g., family, genus, and species); two examples are the NCBI taxonomy \cite{NCBI2009} and the Angiosperm taxonomy \cite{APG2016}.  Taxonomies spanning a broad range of taxa provide structure and completeness that might be hard to obtain otherwise.    
A recent example of the utility of taxonomies is the Open Tree of Life, a draft phylogeny for over 2.3 million species \cite{HinchliffPNAS2015}.

Taxonomies are not, strictly speaking, phylogenies.  In particular, their internal nodes and some of their leaves are labeled with higher-order taxa. Nevertheless, taxonomies have many of the same mathematical characteristics as phylogenies.  Indeed, both phylogenies and taxonomies are \emph{semi-labeled trees} 
\cite{BordewichEvansSemple2006,SempleSteel03}.  We will use this term throughout the rest of the paper to refer to trees with nested taxa.


The fastest previous algorithm for testing ancestral compatibility, based on earlier work by Daniel and Semple \cite{DanielSemple2004}, is due to Berry and Semple \cite{BerrySemple2006}. Their algorithm runs in $O \left (\log^2 n \cdot \tau_\P \right)$ time using $O \left ( \tau_\P \right)$ space. Here, $n$ is the number of distinct taxa in $\P$ and  $\tau_\P = \sum_{i = 1}^k \sum_{v \in I(\T_i)} d(v)^2$, where $I(\T_i)$ is the set of internal nodes of $\T_i$, for each $i \in \{1, \dots , k\}$, and $d(v)$ is the degree of node $v$.   While the algorithm is polynomial, its dependence on node degrees is problematic:  semi-labeled trees can be highly unresolved (i.e., contain nodes of high degree), especially if they are taxonomies.

\vspace{-1.5\parsep}

\paragraph{Our contributions.}
The $\tilde{O}(\MP)$ running time of our ancestral compatibility algorithm is independent of the degrees of the nodes of the input trees, a valuable characteristic for large datasets that include taxonomies.  To achieve this time bound, we extend ideas from our recent algorithm for testing the compatibility of ordinary phylogenetic trees \cite{DengFB2016}.   As in that algorithm, a central notion in the current paper is the \emph{display graph} of profile $\P$, denoted $\DG$.  This is the graph obtained from the disjoint union of the trees in $\P$ by identifying nodes that have the same label (see Section \ref{sec:testAC}).
The term ``display graph'' was introduced by Bryant and Lagergren  \cite{BryantLagergren06}, but similar ideas have been used elsewhere.  In particular, the display graph is closely related to Berry and Semple's \emph{restricted descendancy graph} \cite{BerrySemple2006}, a mixed graph whose directed edges correspond to the (undirected) edges of $\DG$ and whose undirected edges have no correspondence in $\DG$.  The second kind of edges are the major component of the $\tau_\P$ term in the time and space complexity of Berry and Semple's algorithm.  The absence of such edges makes $\DG$ significantly smaller than the restricted descendancy graph.  Display graphs also bear some relation to  \emph{tree alignment graphs} 
\cite{Smith:PloSCB:2013}.

Here, we exploit the display graph more extensively and more directly than our previous work.  Although the display graph of a collection of semi-labeled trees is more complex than that of a collection of  ordinary phylogenies, we are able to extend several of the key ideas --- notably, that of a semi-universal label --- to the general setting of semi-labeled trees.  As in \cite{DengFB2016}, the implementation relies on a dynamic graph data structure, but it requires a more careful amortized analysis based on a weighing scheme.

\vspace{-1.5\parsep}

\paragraph{Contents.} Section \ref{sec:prelims} presents basic definitions regarding semi-labeled trees and ancestral compatibility.  .  Section \ref{sec:dispGraph} introduces the display graph and discusses its properties.  Section \ref{sec:testAC} presents \SLBuild, our algorithm for testing ancestral compatibility.  Section \ref{sec:implementation} gives the implementation details for \SLBuild.  Section \ref{sec:discussion} gives some concluding remarks.

\section{Preliminaries}\label{sec:prelims}

For each positive integer $r$, $[r]$ denotes the set $\{1, \dots , r\}$.

Let $G$ be a graph. $V(G)$ and $E(G)$ denote the node and edge sets of $G$. The \emph{degree} of a node $v \in V(G)$ is the number of edges incident on $v$. A \emph{tree} is an acyclic connected graph.
In this paper, all trees are assumed to be rooted.  For a tree $T$, $r(T)$ denotes the root of $T$. 
Suppose $u, v \in V(T)$.  Then, $u$ is an \emph{ancestor} of $v$ in $T$, denoted $u \le_T v$, if $u$ lies on the path from $v$ to $r(T)$ in $T$.  If $u \le_T v$, then $v$ is a \emph{descendant} of $u$.  Node %
$u$ is a \emph{proper descendant} of $v$ if $u$ is a descendant of $v$ and $v \neq u$. If $\{u,v\} \in E(T)$ and $u \le_T v$, then  $u$ is the \emph{parent} of $v$ and $v$ is a \emph{child} of $u$. If neither $u \le_T v$ nor  $v \le_T u$ hold, then  we write $u \parallel_T v$ and say that $u$ and $v$ are \emph{not comparable} in $T$. 

\vspace{-1.5\parsep}

\paragraph{Semi-labeled trees.} A \emph{semi-labeled tree} is a pair $\T = (T,\phi)$ where $T$ is a tree and $\phi$ is a mapping from a set $L(\T)$ to $V(T)$ such that, for every node $v \in V(T)$ of degree at most two, $v \in \phi(L(\T))$.  $L(\T)$ is the \emph{label set} of $\T$ and $\phi$ is the \emph{labeling function} of $\T$.  

For every node $v \in V(T)$, $\phi^{-1}(v)$ denotes the (possibly empty) subset of $L(\T)$ whose elements map into $v$; these elements as the \emph{labels of $v$} (thus, each label is a taxon). If $\phi^{-1}(v) \neq \emptyset$, then $v$ is \emph{labeled}; otherwise, $v$ is \emph{unlabeled}.  
Note that, by definition, every leaf in a semi-labeled tree is labeled.  Further, any node, including the root, that has a single child must be labeled.  Nodes with two or more children may be labeled or unlabeled.  A semi-labeled tree $\T = (T,\phi)$ is \emph{singularly labeled} if every node in $T$ has
at most one label; $\T$ is \emph{fully labeled} if every node in $T$ is labeled.

Semi-labeled trees, also known as \emph{$X$-trees}, generalize ordinary phylogenetic trees, also known as \emph{phylogenetic $X$-trees} \cite{SempleSteel03}.  An ordinary phylogenetic tree is a semi-labeled tree $\T = (T,\phi)$ where $r(T)$ has degree at least two and $\phi$ is a bijection from $L(\T)$ into leaf set of $T$ (thus, internal nodes are not labeled). 

Let $\T = (T,\phi)$ be a semi-labeled tree and let $\ell$ and $\ell'$ be two labels in $L(\T)$.  
If $\phi(\ell) \le_T \phi(\ell')$, then we write $\ell \le_\T \ell'$, and say that $\ell'$ is a \emph{descendant} of $\ell$ in $\T$ and that $\ell$ is an \emph{ancestor} of $\ell'$.  We write $\ell <_\T \ell'$ if $\phi(\ell')$ is a proper descendant of $\phi(\ell)$.   If $\phi(\ell) \parallel_T \phi(\ell')$, then we write $\ell \parallel_\T \ell'$ and say that $\ell$ and $\ell'$ are \emph{not comparable} in $\T$.  If $\T$ is fully labeled and
$\phi(\ell)$ is the parent of $\phi(\ell')$ in $T$, then $\ell$ is the \emph{parent} of $\ell'$ in $\T$ and $\ell'$ is a \emph{child} of $\ell$ in $\T$; two labels with the same parent are \emph{siblings}.

Two semi-labelled trees $\T = (T,\phi)$ and $\T' = (T', \phi')$ are \emph{isomorphic} if there exists a bijection $\psi : V(T) \rightarrow V(T')$ such that $\phi' = \psi \circ \phi$ and, for any two nodes $u, v \in V(T)$, $(u,v) \in E(T)$ if and only $(\psi(u), \psi(v)) \in E(T')$.


Let  $\T = (T,\phi)$ be a semi-labeled tree. For each $u \in V(T)$,  $X(u)$ denotes the set of all labels in the subtree of $T$ rooted at $u$; that is, $X(u) = \bigcup_{v: u \le_T v} \phi^{-1}(v)$. $X(u)$ is called a \emph{cluster} of $T$.  
$\Cl(\T)$ denotes the set of all clusters of $\T$.   
It is well known \cite[Theorem 3.5.2]{SempleSteel03} that a semi-labeled tree $\T$ is completely determined by $\Cl(\T )$.  That is, if $\Cl(\T) = \Cl(\T')$ for some other semi-labeled tree $\T'$, then $\T$ is isomorphic to $\T'$.


Suppose $A \subseteq L(\T)$ for a semi-labeled tree $\T = (T,\phi)$.  The \emph{restriction} of $\T$ to $A$, denoted $\T |A$, is the semi-labeled tree whose cluster set is
$\Cl(\T | A) = \{X \cap A : X \in \Cl(\T) \text{ and } X \cap A \neq \emptyset \}.$
Intuitively, $\T | A$ is obtained from the minimal rooted subtree of $T$ that connects the nodes in $\phi(A)$ by suppressing all vertices of degree two that are not in $\phi(A)$.

Let $\T = (T,\phi)$ and $\T' = (T', \phi')$ be semi-labeled trees such that $L(\T') \subseteq L(\T)$.  $\T$  \emph{ancestrally displays} $\T'$ if $\Cl(\T') \subseteq \Cl(\T|L(\T'))$.  Equivalently, $\T$ ancestrally displays $\T'$ if $\T'$  can be obtained from $\T | L(\T')$ by contracting edges, and, for any $\ell_1, \ell_2 \in L(\T')$, 
(i) if $\ell_1 <_\mathcal{\T'} \ell_2$, then $\ell_1 <_\T \ell_2$, and
(ii)
if $\ell_1 \parallel_\mathcal{\T'} \ell_2$, then $\ell_1 \parallel_\mathcal{\T} \ell_2$.
The notion of ``ancestrally displays'' for semi-labeled trees generalizes the well-known notion of ``displays'' for ordinary phylogenetic trees \cite{SempleSteel03}.  


For a semi-labelled tree $\T$, let
\begin{equation}
D(\T) = \{(\ell,\ell'): \ell, \ell' \in L(\T) \text{ and } \ell <_\T \ell'\} \quad \text{ and } \quad
N(\T ) = \{\{\ell,\ell'\} : \ell, \ell' \in L(\T) \text{ and } \ell \parallel_\T \ell'\}.
\end{equation}
Note that $D(\T)$ consists of \emph{ordered} pairs, while $N(\T)$ consists of \emph{unordered} pairs.

\begin{lemma}[Bordewich et al.\  \cite{BordewichEvansSemple2006}]\label{lm:DN}
Let $\T$ and $\T'$ be semi-labelled trees such that $L(\T') \subseteq L(\T)$. Then $\T$ ancestrally displays $\T'$ if and only if $D(\T') \subseteq D(\T)$ and $N(\T') \subseteq N(\T)$.
\end{lemma}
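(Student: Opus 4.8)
The plan is to prove the two implications separately; the forward one is essentially a matter of unwinding definitions, and the substantive work is in the converse. For the forward direction I would use the reformulation of ``ancestrally displays'' recorded just before the lemma: $\T$ ancestrally displays $\T'$ exactly when $\T'$ can be obtained from $\T | L(\T')$ by contracting edges and, for all $\ell_1, \ell_2 \in L(\T')$, (i) $\ell_1 <_{\T'} \ell_2$ implies $\ell_1 <_\T \ell_2$ and (ii) $\ell_1 \parallel_{\T'} \ell_2$ implies $\ell_1 \parallel_\T \ell_2$. Granting this, the forward direction is immediate: if $\T$ ancestrally displays $\T'$, then (i) says precisely that $(\ell_1, \ell_2) \in D(\T')$ forces $(\ell_1, \ell_2) \in D(\T)$, that is, $D(\T') \subseteq D(\T)$, and (ii) says precisely that $N(\T') \subseteq N(\T)$.

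For the converse, assume $D(\T') \subseteq D(\T)$ and $N(\T') \subseteq N(\T)$ and put $A = L(\T')$; I would argue directly from the definition by establishing $\Cl(\T') \subseteq \Cl(\T | A)$. First I would record two elementary facts. (a) Forming a restriction does not disturb comparabilities among the retained labels: for $\ell, \ell' \in A$, $\ell <_\T \ell'$ holds if and only if $\ell <_{\T | A} \ell'$ does, and likewise for $\parallel$; hence it suffices to argue with the relations of $\T | A$ in place of those of $\T$. (b) In any semi-labeled tree the clusters that contain a fixed label form a chain under inclusion, so every label $\ell$ lies in a unique minimal cluster, namely $\{ m : \ell \le m \}$, and distinct nodes carry distinct clusters.

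With these facts in place, the plan is to fix an arbitrary cluster $C = X_{\T'}(v) \in \Cl(\T')$ and to produce a node of $T | A$ whose cluster is $C$. If $C = A$ it is the root cluster of $\T | A$; otherwise $v$ has a parent $p$ in $T'$, and I would pick a ``witness'' label $\ell^{\ast} \in X_{\T'}(p) \setminus C$, which exists because $p$ is either labeled or has a second child. One shows that every $\ell \in C$ satisfies $\ell^{\ast} <_{\T'} \ell$ or $\ell^{\ast} \parallel_{\T'} \ell$, and that every $m \in A \setminus C$ is, relative to the labels of $C$, either a strict common ancestor of all of them or incomparable to all of them. Transporting these statements to $\T | A$ through the hypotheses and fact (a) should pin down a node $w$ of $T | A$ with $X_{\T | A}(w) = C$, so that $C \in \Cl(\T | A)$; ranging over all $C$ then yields $\Cl(\T') \subseteq \Cl(\T | A)$, which is exactly the condition that $\T$ ancestrally displays $\T'$.

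I expect this last step to be the main obstacle. The hypotheses $D(\T') \subseteq D(\T)$ and $N(\T') \subseteq N(\T)$ are entirely pairwise, whereas being a cluster is a global property of a laminar family, so some care is needed both to manufacture the correct node $w$ and to certify that no stray label enters $X_{\T | A}(w)$. Two situations appear to demand special handling: a pair of labels that sit on one node of $\T'$ --- such a pair belongs to neither $D(\T')$ nor $N(\T')$ and is therefore unconstrained by the hypotheses --- and the degree-two suppressions performed when forming $\T | A$, which relocate labels and so must be tracked to justify the identification $X_{\T | A}(w) = C$. I would isolate each of these as a small auxiliary claim and dispatch it by a case analysis on the positions of the nodes involved in $T'$ and in $T | A$.
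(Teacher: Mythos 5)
Your forward direction is fine (it is immediate from the reformulation of ``ancestrally displays'' stated in the paper), but note that the paper itself never proves this lemma---it is imported from Bordewich et al.\ \cite{BordewichEvansSemple2006}---so your converse must stand on its own, and that is where the proposal breaks down. The decisive step---exhibiting a node $w$ of $\T|A$ (with $A=L(\T')$) whose cluster is exactly $C$ and certifying that no stray label of $A$ lies below it---is precisely what you leave as an expectation (``should pin down a node $w$''), and the difficulty you flag is not one a small auxiliary case analysis can dispatch: for arbitrary semi-labelled trees the converse you set out to prove is false under the paper's definitions. Take $\T'$ with root labelled $r$ having two children, an unlabelled node whose children are leaves $a$ and $b$, and a leaf $c$; take $\T$ to be the star with root labelled $r$ and leaf children $a,b,c$. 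Then $D(\T')=D(\T)$ and $N(\T')=N(\T)$, yet $\{a,b\}\in\Cl(\T')$ is not a cluster of $\T|L(\T')=\T$, so $\T$ does not ancestrally display $\T'$. A similar failure occurs when $a$ and $b$ share one node of $\T'$ but sit on different, incomparable nodes of $\T$: exactly as you observed, such a pair lies in neither $D(\T')$ nor $N(\T')$, so nothing in the hypotheses forces the cluster of that node to reappear on the $\T$ side. Clusters produced by unlabelled or multiply-labelled nodes of $\T'$ are simply invisible to the pairwise data, so no construction of $w$ can succeed without an additional hypothesis on $\T'$.

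The form in which the lemma is actually used in this paper applies it to trees that are fully and singularly labelled, and under that hypothesis the converse has a short proof that bypasses your witness-label construction entirely. A cluster of $\T'$ is $C=\{m\in A:\ell\le_{\T'}m\}$, where $\ell$ is the unique label of its root node; take $w$ to be the node of $\T|A$ carrying $\ell$. For every $m\in A$ with $m\neq\ell$, exactly one of $\ell<_{\T'}m$, $m<_{\T'}\ell$, $\ell\parallel_{\T'}m$ holds (singular labelling rules out shared nodes), and each of these is transported to $\T$ by $D(\T')\subseteq D(\T)$ and $N(\T')\subseteq N(\T)$; since restriction preserves the ancestor relation among the retained labels, it follows that $\ell\le_{\T'}m$ if and only if $\ell\le_{\T|A}m$, i.e., the cluster of $w$ in $\T|A$ is exactly $C$, giving $\Cl(\T')\subseteq\Cl(\T|A)$. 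So the gap in your proposal is twofold: the key construction is left unexecuted, and in the generality in which you attempt it no construction exists; with the full-and-singular labelling hypothesis under which the lemma is invoked here, the argument is both necessary to state and considerably simpler than the one you sketch.
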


\vspace{-1.5\parsep}

\paragraph{Profiles and ancestral compatibility.}
Throughout the rest of this paper $\P = \{\T_1, \dots, \T_k\}$ denotes a set where, for each $i \in [k]$, $\T_i = (T_i, \phi_i)$ is a semi-labeled tree. We refer to $\P$ as a \emph{profile}, and write $L(\P)$ to denote $\bigcup_{i\in[k]} L(\T_i)$, the \emph{label set} of $\P$.   Figure~\ref{fig:profile} shows a profile where $L(\P) = \{a,b,c,d,e,f,g,h,i\}$.
We write $V(\P)$ for $\bigcup_{i\in[k]} V(T_i)$ and $E(\P)$ for $\bigcup_{i\in[k]} E(T_i)$, 
The \emph{size} of $\P$ is $\MP = |V(\P)| + |E(\P)|$.  


\begin{figure}
\centering
\begin{minipage}[b]{0.46\linewidth}
  \includegraphics[scale=0.25]{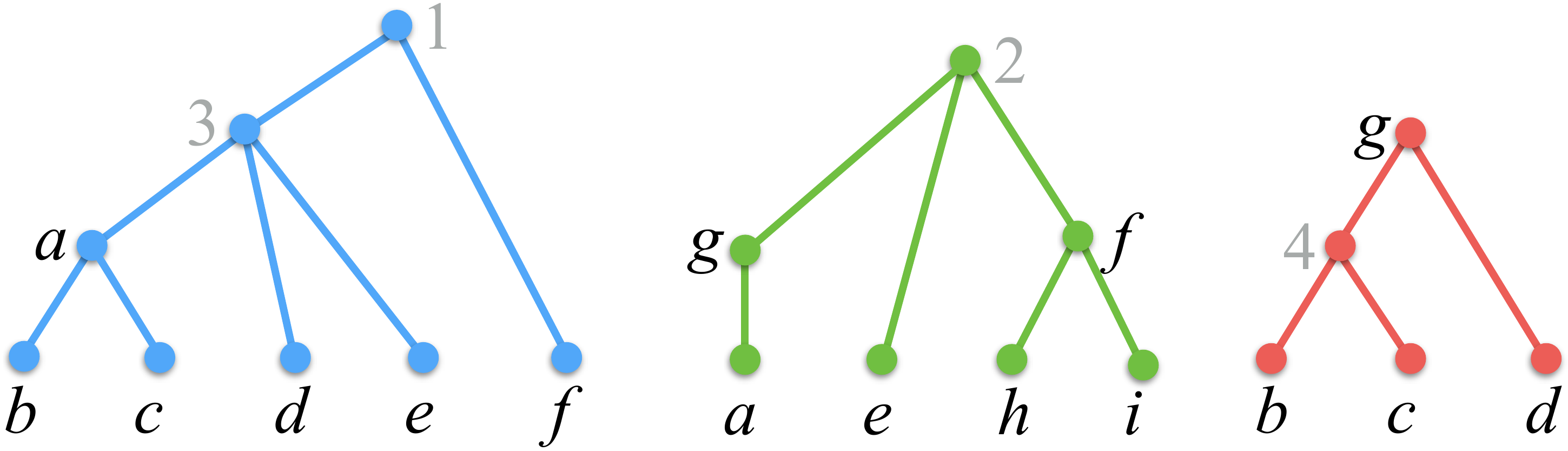}
  \caption{A profile $\P= \{\T_1, \T_2, \T_3\}$ --- trees are ordered left-to-right. 
The letters are the original labels; grey numbers are labels added to make the trees fully labeled. (Adapted from  \cite{BerrySemple2006}.)}
  \label{fig:profile}
\end{minipage}
\quad
\begin{minipage}[b]{0.24\linewidth}
  \includegraphics[scale=0.25]{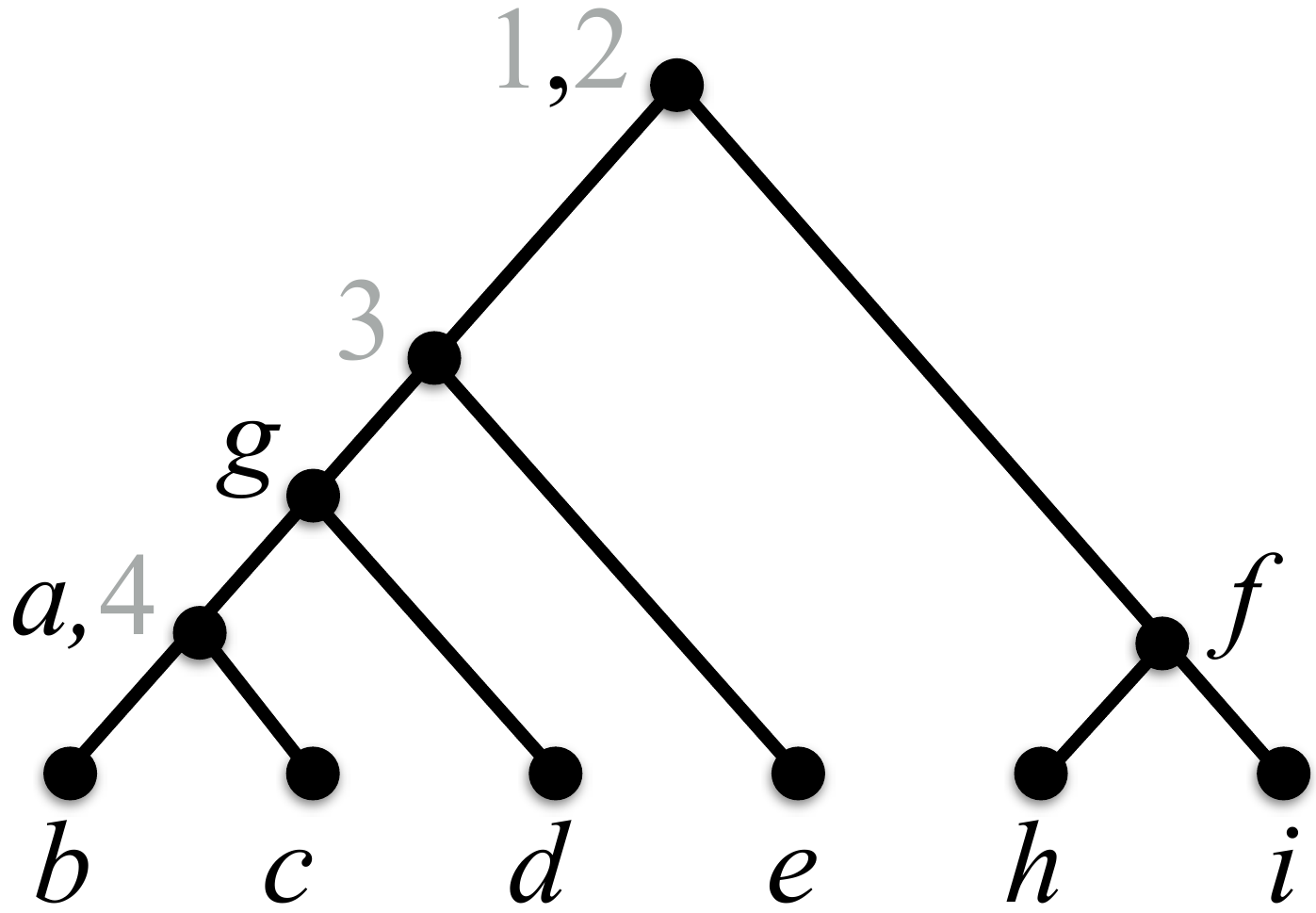}
  \caption{A tree $\T$ that ancestrally displays the profile of Figure \ref{fig:profile}. (Adapted from  \cite{BerrySemple2006}.)}
  \label{fig:supertree}
\end{minipage}
\quad
\begin{minipage}[b]{0.23\linewidth}
  \includegraphics[scale=0.25]{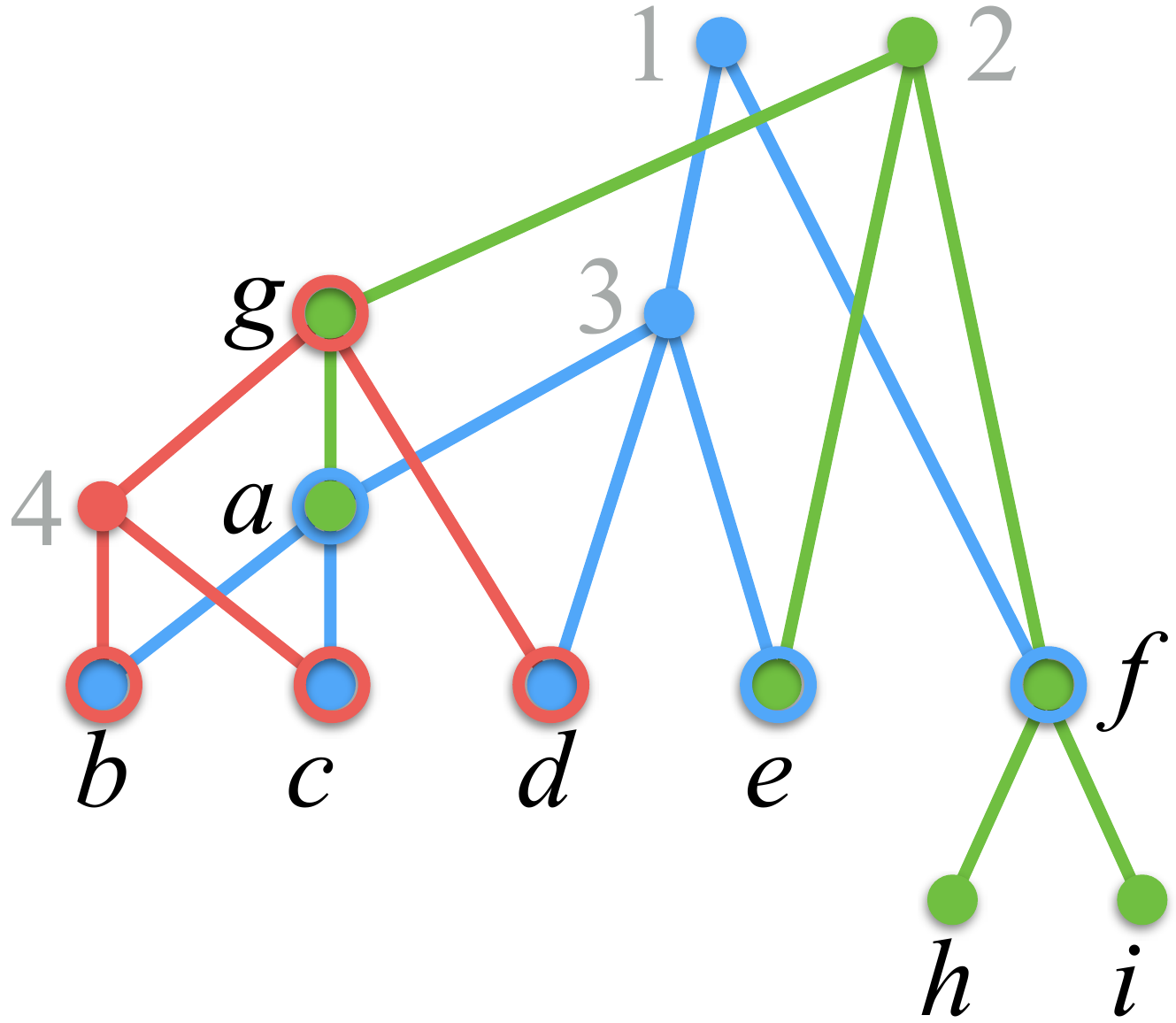}
  \caption{The display graph $\DG$ for the profile of Figure \ref{fig:profile}.} 
  \label{fig:displayGraph}
\end{minipage}
\end{figure}

$\P$ is \emph{ancestrally compatible} if there is a rooted semi-labeled  tree $\T$ that ancestrally displays each of the trees in $\P$.  If $\T$ exists, we say that $\T$ \emph{ancestrally displays} $\P$ (see Figure \ref{fig:supertree}). 

Given a subset $X$ of $L(\P)$, the \emph{restriction} of $\P$ to $X$, denoted $\P|X$, is the profile $\{\T_1|X \cap L(\T_1), \dots, \T_k|X \cap L(\T_k)\}$.  The proof of the following lemma is straightforward.

\begin{lemma}\label{lm:compatSubprofile}
Suppose $\P$ is ancestrally compatible and let $\T$ be a tree that ancestrally displays $\P$.  Then, for any $X \subseteq L(\P)$, $\T | X$ ancestrally displays $\P| X$.
\end{lemma}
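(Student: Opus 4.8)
The plan is to reduce the statement about profiles to the corresponding statement about individual trees, and then to invoke the characterization of ``ancestrally displays'' via clusters (or equivalently via $D(\cdot)$ and $N(\cdot)$, Lemma~\ref{lm:DN}). Concretely, suppose $\T$ ancestrally displays $\P$, and fix $X \subseteq L(\P)$. By definition of ``ancestrally displays $\P$'', $\T$ ancestrally displays $\T_i$ for every $i \in [k]$. We must show $\T|X$ ancestrally displays $\P|X = \{\T_1|(X\cap L(\T_1)), \dots, \T_k|(X\cap L(\T_k))\}$, i.e., that $\T|X$ ancestrally displays $\T_i|(X\cap L(\T_i))$ for each $i$. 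So the whole claim follows once we establish the following single-tree fact: if $\T$ ancestrally displays $\T'$ and $L(\T')\subseteq L(\T)$, then for any $X\subseteq L(\T)$, $\T|X$ ancestrally displays $\T'|(X\cap L(\T'))$.

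First I would record the basic behavior of restriction on clusters: for a semi-labeled tree $\S$ and sets $B \subseteq A \subseteq L(\S)$, one has $(\S|A)|B = \S|B$, directly from the definition $\Cl(\S|A) = \{Y\cap A : Y\in\Cl(\S),\ Y\cap A\neq\emptyset\}$, since intersecting first with $A$ and then with $B$ is the same as intersecting with $B$ (using $B\subseteq A$). Next I would restate ``ancestrally displays'' in its cluster form: $\T$ ancestrally displays $\T'$ iff $\Cl(\T') \subseteq \Cl(\T|L(\T'))$. Then, writing $X' = X\cap L(\T')$, I want to compare $\Cl(\T'|X')$ with $\Cl\bigl((\T|X)\,|\,L(\T'|X')\bigr)$. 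The label set of $\T'|X'$ is exactly $X'$ (every element of $X'$ still labels some node of the restricted tree), and $X' \subseteq X \cap L(\T')$, so $L\bigl((\T|X)\bigr) = X\cap L(\T)$ contains $X'$; hence the target cluster set is $\Cl\bigl((\T|X)|X'\bigr)$.

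Now the computation closes as follows. On one side, $\Cl(\T'|X') = \{Y\cap X' : Y\in\Cl(\T'),\ Y\cap X'\neq\emptyset\}$. Since $\Cl(\T')\subseteq \Cl(\T|L(\T'))$, every such $Y$ is of the form $Z\cap L(\T')$ for some $Z\in\Cl(\T)$; intersecting with $X' \subseteq L(\T')$ gives $Y\cap X' = Z\cap X'$. On the other side, $(\T|X)|X' = \T|X'$ by the nesting identity above (here $X' = X\cap L(\T') \subseteq X$ provided $X' \subseteq X$, which holds), so $\Cl\bigl((\T|X)|X'\bigr) = \Cl(\T|X') = \{Z\cap X' : Z\in\Cl(\T),\ Z\cap X'\neq\emptyset\}$. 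Comparing, every nonempty cluster of $\T'|X'$ is $Z\cap X'$ for some $Z\in\Cl(\T)$ with $Z\cap X'\neq\emptyset$, hence lies in $\Cl(\T|X') = \Cl\bigl((\T|X)|L(\T'|X')\bigr)$. That is precisely $\T|X$ ancestrally displays $\T'|X'$, as required. Applying this with $\T' = \T_i$ and $X' = X\cap L(\T_i)$ for each $i\in[k]$ finishes the proof.

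The only real subtlety — and the step I would be most careful about — is bookkeeping the label sets: making sure that $X\cap L(\T')$ is genuinely the label set of $\T'|(X\cap L(\T'))$, that this set is contained in both $X$ and $L(\T')$ so that the restriction-nesting identity $(\T|X)|(X\cap L(\T')) = \T|(X\cap L(\T'))$ applies, and that the ``$\cap X'\neq\emptyset$'' side conditions in the cluster definitions match up on both sides (a cluster $Y\cap X'$ of $\T'|X'$ is nonempty iff, writing $Y = Z\cap L(\T')$, we have $Z\cap X'\neq\emptyset$, since $X'\subseteq L(\T')$). None of this is deep, but it is the place where an imprecise argument could go wrong, so I would state the restriction-nesting identity as an explicit sublemma and use it cleanly. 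Everything else is a direct unwinding of definitions, which is why the excerpt calls the proof ``straightforward.''
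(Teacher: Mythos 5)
Your proof is correct, and since the paper omits the argument entirely (the lemma is stated with ``the proof \ldots is straightforward''), what you have written is exactly the definition-unwinding the authors had in mind: reduce to the single-tree claim, use the cluster characterization $\Cl(\T')\subseteq\Cl(\T|L(\T'))$ of ancestral display, and close the computation with the restriction-nesting identity $(\T|A)|B=\T|B$ for $B\subseteq A\subseteq L(\T)$, which indeed follows directly from the cluster definition of restriction together with the fact that a semi-labeled tree is determined by its cluster set. The only point you gloss over, which is harmless and which the paper also ignores, is the degenerate case $X\cap L(\T_i)=\emptyset$, where $\T_i|(X\cap L(\T_i))$ is empty and is displayed vacuously.
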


A  semi-labeled tree $\T = (T, \phi)$ is \emph{fully labeled} if 
every node in $T$ is labeled.  Suppose $\P$ contains trees that are not fully labeled.  We can convert $\P$ into an equivalent profile $\P'$ of fully-labeled trees as follows.  For each $i \in [k]$, let $l_i$ be the number of unlabeled nodes in $T_i$.  Create a set $L'$ of $n' = \sum_{i \in [k]} l_i$ labels such that $L' \cap L(\P) = \emptyset$.  For each $i \in [k]$ and each $v \in V(T_i)$ such that $\phi_i^{-1}(v) = \emptyset$, make $\phi_i^{-1}(v) = \{\ell\}$, where $\ell$ is a distinct element from $L'$.
We refer to $\P'$ as the \emph{profile obtained by adding distinct new labels to $\P$} (see Figure \ref{fig:profile}).

\begin{lemma}[Daniel and Semple \cite{DanielSemple2004}]\label{lm:fullyL}
Let $\P'$ be the profile obtained by adding distinct new labels to $\P$. Then, $\P$ is ancestrally compatible if and only if $\P'$ is ancestrally compatible. Further, if $\T$ is a semi-labeled phylogenetic tree that ancestrally displays $\P'$, then $\T$ ancestrally displays $\P$.
\end{lemma}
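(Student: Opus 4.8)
My plan is to prove the two implications separately: the ``if'' direction together with the ``Further'' clause should reduce to a short restriction argument, while the ``only if'' direction calls for an explicit construction, and that is where I expect the work to lie.

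\emph{``If'' direction and ``Further'' clause.} First I would observe that $\P'|L(\P)=\P$. For each $i$ the new labels avoid $L(\P)$, so $L(\P)\cap L(\T_i')=L(\T_i)$, and $\T_i'|L(\T_i)$ is isomorphic to $\T_i$: the two trees have the same underlying tree $T_i$, and $\phi_i'$ differs from $\phi_i$ only by assigning the new labels to the nodes that are unlabeled in $\T_i$ --- nodes which, by the definition of a semi-labeled tree, have degree at least three, while every leaf of $T_i$ is already labeled. Hence $X_{\T_i'}(u)\cap L(\T_i)=X_{\T_i}(u)\neq\emptyset$ for every $u\in V(T_i)$, so $\Cl(\T_i'|L(\T_i))=\Cl(\T_i)$, and the characterization of a semi-labeled tree by its cluster set (\cite[Theorem 3.5.2]{SempleSteel03}) gives the isomorphism. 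Now, if $\T$ ancestrally displays $\P'$, then by Lemma~\ref{lm:compatSubprofile} (with $X=L(\P)$) the tree $\T|L(\P)$ ancestrally displays $\P'|L(\P)=\P$; since $L(\T_i)\subseteq L(\P)$ and restriction is transitive --- $(\T|A)|B=\T|B$ whenever $B\subseteq A$, which is immediate from the cluster definition --- we get $(\T|L(\P))|L(\T_i)=\T|L(\T_i)$, so this says precisely that $\T$ itself ancestrally displays every $\T_i$, i.e.\ that $\T$ ancestrally displays $\P$. In particular, ancestral compatibility of $\P'$ implies that of $\P$.

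\emph{``Only if'' direction.} Let $\T=(T,\phi)$ ancestrally display $\P$; by the previous paragraph I may assume $L(\T)=L(\P)$. Since $\T$ ancestrally displays $\T_i$ we have $\Cl(\T_i)\subseteq\Cl(\T|L(\T_i))$, so for each node $v$ of $T_i$ there is a highest node $\tilde w_v$ of $T$ with $X_\T(\tilde w_v)\cap L(\T_i)=X_{\T_i}(v)$. I would first record that $v\mapsto\tilde w_v$ is an embedding of $(V(T_i),\le_{T_i})$ into $(V(T),\le_T)$ that preserves the order, the strict order, and incomparability --- this holds because $\tilde w_v$ is determined by its $L(\T_i)$-restricted cluster and, in a tree, containment of clusters is equivalent to containment of subtrees. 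I then form $\T^\ast$ from $\T$ by, for each $i$ and each \emph{unlabeled} node $v$ of $T_i$ (carrying new label $\ell_v$), inserting a new node $\hat w_v$ immediately above $\tilde w_v$ --- subdividing the edge of $T$ entering $\tilde w_v$, or adding a new root when $\tilde w_v=r(T)$ --- labeling it $\ell_v$, and stacking such nodes in an arbitrary order when several would go above the same node. Every new node is labeled and has degree at most two, so $\T^\ast$ is a semi-labeled tree with $L(\T^\ast)=L(\P')$. Using the embedding property together with the defining equation $X_\T(\tilde w_v)\cap L(\T_i)=X_{\T_i}(v)$, one verifies that for every $i$ and every node $v$ of $T_i$ the set
\[
X_{\T_i'}(v)\;=\;X_{\T_i}(v)\cup\{\,\ell_u : v\le_{T_i} u,\ u\text{ unlabeled in }\T_i\,\}
\]
equals $X_{\T^\ast}(a_v)\cap L(\T_i')$, where $a_v=\hat w_v$ if $v$ is unlabeled in $\T_i$ and $a_v=\tilde w_v$ otherwise. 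Hence $X_{\T_i'}(v)\in\Cl(\T^\ast|L(\T_i'))$ for every $v$ (the set is nonempty since $X_{\T_i}(v)\neq\emptyset$), so $\Cl(\T_i')\subseteq\Cl(\T^\ast|L(\T_i'))$; as this holds for all $i$, $\T^\ast$ ancestrally displays $\P'$, and $\P'$ is ancestrally compatible.

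The main obstacle is the displayed identity, which amounts to controlling exactly which grafted labels $\ell_u$ lie below the anchor $a_v$ in $\T^\ast$: the inclusion ``$\supseteq$'' needs that $\hat w_u$ falls inside the subtree of $a_v$ precisely when $u$ is in the subtree of $v$ in $T_i$, and ``$\subseteq$'' needs that no old label of $L(\T_i)\setminus X_{\T_i}(v)$ --- and no $\ell_u$ with $u$ not in the subtree of $v$ --- sneaks in. Each of these reduces to the embedding property and the equation $X_\T(\tilde w_v)\cap L(\T_i)=X_{\T_i}(v)$, but they must hold for all input trees simultaneously with every edge subdivision performed at once, and carrying out that bookkeeping cleanly is the delicate part. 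If the surgery proves awkward, one can instead define $\T^\ast$ through its cluster set --- namely $\Cl(\T)$ enlarged by all the augmented clusters above --- check that the resulting family is laminar, and appeal to \cite[Theorem 3.5.2]{SempleSteel03} (together with Lemma~\ref{lm:DN} if convenient).
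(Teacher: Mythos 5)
The paper does not prove this lemma at all — it imports it wholesale from Daniel and Semple \cite{DanielSemple2004} — so there is no in-paper proof to match; what you have written is a self-contained replacement, and it is essentially correct. The easy direction is exactly the standard restriction argument: $\P'|L(\P)=\P$ because the new labels sit on formerly unlabeled (hence degree-$\ge 3$) nodes, and combining Lemma~\ref{lm:compatSubprofile} with the transitivity of restriction gives both the ``if'' direction and the ``Further'' clause. For the converse, your anchor-and-subdivide construction does work, and the ``delicate'' verification you flag goes through along exactly the lines you indicate: (a) within a single input tree $\T_i$, distinct nodes have distinct clusters (this uses the semi-labeled condition that a node of degree at most two is labeled), so the anchor map $v\mapsto\tilde w_v$ is injective on $V(T_i)$ and is a strict order- and incomparability-preserving embedding, and the ``highest'' node is well defined since all candidates are pairwise comparable; (b) when several new nodes are stacked above a common anchor they necessarily come from \emph{different} input trees, and since the clusters of $\T_i'$ are intersected with $L(\T_i')$ only, the arbitrary stacking order is harmless; (c) the old labels below $a_v$ are exactly $X_\T(\tilde w_v)$, whose intersection with $L(\T_i)$ is $X_{\T_i}(v)$ by the defining equation, and a new label $\ell_u$ of $\T_i'$ lies below $a_v$ if and only if $\tilde w_u$ is a descendant of $\tilde w_v$, which by the embedding is equivalent to $v\le_{T_i} u$. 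Two small points worth making explicit if you write this up: the injectivity/strictness claims in (a) are where the degree condition on unlabeled nodes is actually used (in an arbitrary tree, cluster containment would not force strict subtree containment), and the reduction to $L(\T)=L(\P)$ at the start of the converse is convenient but not needed, since all clusters are intersected with $L(\T_i')$ anyway. Your fallback via laminar families and \cite[Theorem 3.5.2]{SempleSteel03} is also viable, but the direct surgery as you set it up already suffices.
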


From this point forward, we shall assume that,
for each $i \in [k]$, $\T_i$ is fully and singularly labeled.
By Lemma \ref{lm:fullyL}, no generality is lost in assuming that all trees in $\P$ are fully labeled. 
The assumption that the trees are singularly labeled is inessential; it is only for clarity.  Note that, even with the latter assumption, a tree that ancestrally displays $\P$ is not necessarily singularly labeled.  Figure \ref{fig:supertree} illustrates this fact.

\section{The Display Graph}\label{sec:dispGraph}

The \emph{display graph} of a profile $\P$, denoted $\DG$, is the graph obtained from the disjoint union of the underlying trees $T_1, \dots, T_k$ by identifying nodes that have the same label. Multiple edges between the same pair of nodes are replaced by a single edge. See Figure~\ref{fig:displayGraph}. 

$\DG$ has $O(\MP)$ nodes and edges, and can be constructed in $O(\MP)$ time.  By our assumption 
that all the trees in $\P$ are fully and singularly labeled,
there is a bijection between the labels in $L(\P)$ and the nodes of $\DG$.  Thus, from this point forward, we refer to the nodes of $\DG$ by their labels.
It is easy to see that if $\DG$ is not connected, then $\P$ decomposes into label-disjoint sub-profiles, and that $\P$ is compatible if and only if each sub-profile is compatible.  Thus, we shall assume, without loss of generality, that $\DG$ is connected. 

\vspace{-1.5\parsep}

\paragraph{Positions.} A \emph{position} (for $\P$) is a vector $U = (U(1), \dots, U(k))$, where $U(i) \subseteq L(\T_i)$, for each $i \in [k]$.  Since labels may be shared among trees, we may have $U(i) \cap U(j) \neq \emptyset$, for $i, j \in [k]$ with $i \neq j$.   
For each $i \in [k]$, let $\Desc_i(U) = \{\ell : \ell' \le_{\T_i} \ell, \text{ for some } \ell'  \in U(i)\}$, and let $\Desc_\P(U) = \bigcup_{i \in [k]} \Desc_i(U)$.  

A position $U$ is \emph{valid} if, for each $i \in [k]$, 
\vspace{-0.1cm}
\begin{enumerate}[(V1)]
\vspace{-0.5\parskip}
 \itemsep1pt \parskip0pt \parsep0pt
\item\label{item:v1}
if $|U(i)| \ge 2$, then the elements of $U(i)$ are siblings in $\T_i$ and
\item\label{item:v2}
$\Desc_i(U) = \Desc_\P(U) \cap L(\T_i)$.
\end{enumerate}
\vspace{-1\parsep}

\begin{lemma}\label{lem:V2}
For any valid position $U$, $\P | \Desc_\P(U) = \{\T_1|\Desc_1(U), \dots, \T_k|\Desc_k(U)\}$.
\end{lemma}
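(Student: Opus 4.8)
The plan is to reduce the statement directly to validity condition (V2), by unwinding the definition of the restriction of a profile. First I would recall that, by definition, for any $X \subseteq L(\P)$ one has $\P | X = \{\T_1 | (X \cap L(\T_1)), \dots, \T_k | (X \cap L(\T_k))\}$. Applying this with $X = \Desc_\P(U)$ is legitimate, since $\Desc_\P(U) = \bigcup_{i \in [k]} \Desc_i(U)$ and each $\Desc_i(U) \subseteq L(\T_i) \subseteq L(\P)$, so $\Desc_\P(U) \subseteq L(\P)$. This gives $\P | \Desc_\P(U) = \{\T_1 | (\Desc_\P(U) \cap L(\T_1)), \dots, \T_k | (\Desc_\P(U) \cap L(\T_k))\}$.

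Next I would invoke (V2) termwise: a valid position satisfies $\Desc_i(U) = \Desc_\P(U) \cap L(\T_i)$ for every $i \in [k]$. Substituting this identity into each component of the expression above yields $\P | \Desc_\P(U) = \{\T_1 | \Desc_1(U), \dots, \T_k | \Desc_k(U)\}$, which is exactly the claimed equality. I would also note explicitly that condition (V1) is not used here; the entire force of the lemma comes from (V2). Without (V2) one only gets the inclusion $\Desc_i(U) \subseteq \Desc_\P(U) \cap L(\T_i)$, the gap being the labels of $\T_i$ that occur as descendants of some $U(j)$ in another tree $\T_j$ (via shared labels) but not as descendants of $U(i)$ in $\T_i$ itself; (V2) is precisely the assertion that this gap is empty, so that restricting $\T_i$ inside $\P | \Desc_\P(U)$ restricts it to $\Desc_i(U)$ and nothing more.

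Since the argument is essentially a one-line substitution, I do not anticipate any genuine obstacle. The only point worth checking is that every restriction operator appearing is applied to a subset of the relevant label set so that it is a well-defined instance of the tree-restriction operation — and $\Desc_\P(U) \cap L(\T_i) \subseteq L(\T_i)$ makes this immediate. In effect this lemma is just a convenient bookkeeping reformulation of validity that will be reused whenever the algorithm reasons about restrictions of $\P$ along a position, so the ``proof'' is simply to make that reformulation explicit.
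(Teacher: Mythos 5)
Your proof is correct and follows essentially the same route as the paper: both invoke (V2) to identify $\Desc_i(U)$ with $\Desc_\P(U) \cap L(\T_i)$ and then conclude directly from the definition of $\P | \Desc_\P(U)$; your direct set substitution and the paper's remark that the two restrictions are isomorphic are the same argument in slightly different words.
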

\begin{proof}
By (V\ref{item:v2}), we have that $\T_i|\Desc_i(U)$ and $\T_i| \Desc_\P(U) \cap L(\T_i)$ are isomorphic, for each $i \in [k]$. The lemma then follows from the definition of $\P | \Desc_\P(U)$.
\end{proof}

For any valid position $U$, $\DG(U)$ denotes the subgraph of $\DG$ induced by $\Desc_\P(U)$.

\begin{observation}\label{obs:update1}
For any valid position $U$, $\DG(U)$ is the subgraph of $\DG$ obtained by deleting all labels in $V(\DG) \setminus \Desc_\P(U)$, along with all incident edges.
\end{observation}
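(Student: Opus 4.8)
The plan is simply to unwind the two descriptions and check that they name the same graph. Recall the general fact that, for a graph $G$ and a vertex subset $S \subseteq V(G)$, the subgraph of $G$ induced by $S$ --- with vertex set $S$ and edge set $\{e \in E(G) : \text{both endpoints of } e \text{ lie in } S\}$ --- coincides with the graph obtained from $G$ by deleting every vertex in $V(G)\setminus S$ together with all incident edges. Indeed, deleting a vertex $v$ removes exactly the edges having $v$ as an endpoint, so after performing all of these deletions the surviving edges are precisely those both of whose endpoints lie in $S$. Instantiating this with $G = \DG$ and $S = \Desc_\P(U)$ yields the observation at once.

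The only point that genuinely requires a word of justification is that $\Desc_\P(U)$ is a subset of $V(\DG)$, so that ``deleting the nodes in $V(\DG)\setminus\Desc_\P(U)$'' is well defined and leaves behind exactly the node set $\Desc_\P(U)$. This is immediate from the standing assumption that each $\T_i$ is fully and singularly labeled: under that assumption $L(\P)$ is in bijection with $V(\DG)$, and we have agreed to identify the two. Since $\Desc_\P(U) = \bigcup_{i\in[k]}\Desc_i(U)$ and $\Desc_i(U) \subseteq L(\T_i) \subseteq L(\P)$ for every $i$, we conclude $\Desc_\P(U) \subseteq L(\P) = V(\DG)$. Validity of $U$ is not needed for this identity; it only guarantees that $\DG(U)$ is defined in the first place, and will be used later when $\DG(U)$ is related to the restricted profile $\P|\Desc_\P(U)$ via Lemma~\ref{lem:V2}.

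I do not expect any real obstacle here: the statement is essentially the definition of an induced subgraph, recorded separately because the ``deletion'' formulation is the one the dynamic-graph data structure of Section~\ref{sec:implementation} actually maintains as the algorithm proceeds. The single mild point worth flagging is that the collapsing of parallel edges happens once, when $\DG$ is first built, prior to any restriction; since restricting to $\Desc_\P(U)$ only ever removes vertices and edges, it cannot create any mismatch between ``the subgraph of $\DG$ induced by $\Desc_\P(U)$'' and ``$\DG$ with the nodes outside $\Desc_\P(U)$ deleted.''
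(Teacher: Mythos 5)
Your argument is correct and matches the paper's (implicit) reasoning: the paper records this as an Observation with no proof precisely because $\DG(U)$ is \emph{defined} as the subgraph of $\DG$ induced by $\Desc_\P(U)$, and an induced subgraph coincides with the graph obtained by deleting the complementary vertices together with their incident edges. Your additional remarks --- that $\Desc_\P(U) \subseteq L(\P) = V(\DG)$ via the fully-and-singularly-labeled assumption, and that validity of $U$ plays no role here --- are accurate and harmless elaborations of the same observation.
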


A valid position of special interest to us is $\Uinit$
, where 
$\Uinit(i) = \phi_i^{-1}(r(T_i))$, for each $i \in [k]$.  That is, $\Uinit(i)$ is a singleton containing only the label of $r(T_i)$.  Thus, in Figure \ref{fig:displayGraph}, $(\Uinit(1), \Uinit(2), \Uinit(3)) = (\{1\}, \{2\}, \{g\})$.
It is straightforward to verify that $\Uinit$ is indeed valid, that $\Desc_\P(\Uinit) = L(\P)$, and that $\DG(\Uinit) = \DG$.  

\vspace{-1.5\parsep}

\paragraph{Semi-universal labels.} Let $U$ be a valid position, and let $\ell$ be a label in $U$.  Then, $\ell$ is \emph{semi-universal in $U$} if  $U(i) = \{\ell\}$, for every $i \in [k]$ such that $\ell \in L(\T_i)$.
It can be verified that in Figure \ref{fig:displayGraph}, labels $1$ and $2$ are semi-universal in $\Uinit$, but $g$ is not, since $g$ is in both $L(\T_2)$ and $L(\T_3)$, but $\Uinit(2) \neq \{g\}$.  

The term ``semi-universal'', borrowed from Pe'er et al.\ \cite{PeerShamirSharan04}, derives from the following fact.
Suppose that $\P$ is ancestrally compatible, that $\T$ is a tree that ancestrally displays $\P$, and that $\ell$ is a semi-universal label for some valid position $U$.  Then, as we shall see, $\ell$ must label the root $u_\ell$ of a subtree of $\T$ that contains all the descendants of $\ell$ in $\T_i$, for every $i$ such that $\ell \in L(\T_i)$.  The qualifier ``semi'' is because this subtree may also contain labels that do not descend from $\ell$ in any input tree, but descend from some other semi-universal label $\ell'$ in $U$ instead. In this case, $\ell'$ also labels $u_\ell$.  This property of semi-universal labels is exploited in both our ancestral compatibility algorithm and its proof of correctness (see Section \ref{sec:testAC}).

For each label $\ell \in L(\P)$, let $k_\ell$ denote the number of input trees that contain label $\ell$.  We can obtain $k_\ell$ for every $\ell \in L(\P)$ in $O(\MP)$ time during the construction of $\DG$.   

\begin{lemma}\label{lem:semiU}
Let $U = (U(1), \dots , U(k))$ be a valid position.  Then, label $\ell$ is semi-universal in $U$ if the cardinality of the set $J_\ell = \{i \in [k] : U(i) = \{\ell\}\}$ equals $k_\ell$.
\end{lemma}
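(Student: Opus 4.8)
The plan is to reduce the statement to an elementary counting argument about nested finite sets. Set $K_\ell = \{i \in [k] : \ell \in L(\T_i)\}$; by definition $k_\ell = |K_\ell|$, and $\ell$ is semi-universal in $U$ exactly when $U(i) = \{\ell\}$ for every $i \in K_\ell$, i.e.\ when $K_\ell \subseteq J_\ell$. So the entire content of the lemma is the implication ``$|J_\ell| = k_\ell \implies K_\ell \subseteq J_\ell$''.

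The first step is to record the reverse inclusion $J_\ell \subseteq K_\ell$, which holds for \emph{any} position $U$, valid or not: if $i \in J_\ell$ then $U(i) = \{\ell\}$, and since $U(i) \subseteq L(\T_i)$ by the definition of a position, we get $\ell \in L(\T_i)$, i.e.\ $i \in K_\ell$. The second step combines this with the hypothesis: from $J_\ell \subseteq K_\ell$ and $|J_\ell| = k_\ell = |K_\ell| < \infty$ it follows that $J_\ell = K_\ell$, so in particular $K_\ell \subseteq J_\ell$, which is precisely semi-universality of $\ell$ in $U$. (It is worth noting along the way that $\ell \in L(\P)$ forces $k_\ell \ge 1$, hence $J_\ell \neq \emptyset$, so $\ell$ genuinely occurs in $U$ and the notion of semi-universality is applicable.)

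The main --- indeed, essentially the only --- point requiring care is to invoke the defining property $U(i) \subseteq L(\T_i)$ of a position when establishing $J_\ell \subseteq K_\ell$; once that inclusion is in hand, the conclusion is immediate from comparing cardinalities. Validity of $U$ is used only insofar as it is part of the hypothesis under which ``semi-universal in $U$'' was defined, and the same identity $J_\ell = K_\ell$ also delivers the (unstated) converse, so the condition in the lemma is in fact a characterization.
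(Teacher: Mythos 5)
Your proof is correct and follows essentially the same route as the paper's (much terser) proof: the paper's one-line argument implicitly relies on exactly your observation that $J_\ell \subseteq \{i \in [k] : \ell \in L(\T_i)\}$ because $U(i) \subseteq L(\T_i)$, so equality of the cardinalities forces the two sets to coincide, which is semi-universality. You have merely made the containment and the counting step explicit (and correctly noted the converse also holds), so there is nothing to add.
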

\begin{proof}
By definition, $U(i) = \{\ell\}$, for every $i \in J_\ell$. Since $|J_\ell| = k_\ell$, the lemma follows.
\end{proof}

\vspace{-1.5\parsep}

\paragraph{Successor positions.} 
For every $i \in [k]$ and every $\ell \in L(\T_i)$, let $\Ch_i(\ell)$ denote the set of children of $\ell$ in $L(\T_i)$.  For a subset $A$ of $L(\T_i)$, let $\Ch_i(A) = \bigcup_{\ell \in A} \Ch_i(\ell)$.  
Let $U$ be a valid position, and $S$ be the set of semi-universal labels in $U$.  The \emph{successor of $U$ with respect to $S$} is the position $U'$ defined as follows.  For each $\ell \in S$ and each $i \in [k]$, if $U(i) = \{\ell\}$, then $U'(i) = \Ch_i(\ell)$; otherwise, $U'(i) = U(i)$.  

In Figure \ref{fig:displayGraph}, the set of semi-universal labels in $\Uinit$ is $S = \{1, 2\}$.  Since $\Ch_1(1) = \{3,f\}$ and $\Ch_2(2) = \{e,f,g\}$, the successor of $\Uinit$ is $U'  = (\{3,f\}, \{e,f,g\}, \{g\})$.

\begin{observation}\label{obs:update2}
Let $U$ be a valid position, and let $U'$ be the successor of $U$ with respect to the set $S$ of semi-universal labels in $U$.  Then, $\DG(U')$ can be obtained from $\DG(U)$ by doing the following for each $\ell \in S$: (1) for each $i \in [k]$ such that $U(i) = \{\ell\}$, delete all edges between $\ell$ and $\Ch_i(\ell)$; (2) delete $\ell$.
\end{observation}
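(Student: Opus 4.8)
The plan is to reduce the statement to a single identity about vertex sets and then verify that the listed operations realize the corresponding graph surgery. Recall that, by definition (cf.\ Observation~\ref{obs:update1}), $\DG(U)$ and $\DG(U')$ are the subgraphs of $\DG$ induced by $\Desc_\P(U)$ and $\Desc_\P(U')$, respectively (this uses that $U'$, like $U$, is a valid position). Hence the whole observation follows once I show that $\Desc_\P(U') = \Desc_\P(U) \setminus S$ and that $S \subseteq \Desc_\P(U)$: together these say that $\DG(U')$ is exactly $\DG(U)$ with the vertices of $S$ and all edges incident to them removed.

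To prove the identity I would argue tree by tree. Fix $i \in [k]$. If $U(i) = \{\ell\}$ for a (necessarily unique) $\ell \in S$, then the definition of the successor gives $U'(i) = \Ch_i(\ell)$, and since $T_i$ is a tree, every proper descendant of $\ell$ in $\T_i$ descends from exactly one child of $\ell$; hence $\Desc_i(U') = \Desc_i(\Ch_i(\ell)) = \Desc_i(\{\ell\}) \setminus \{\ell\} = \Desc_i(U) \setminus \{\ell\}$. Otherwise $U'(i) = U(i)$, so $\Desc_i(U') = \Desc_i(U)$. Taking unions over $i$ gives $\Desc_\P(U') \subseteq \Desc_\P(U)$ immediately, and no $\ell \in S$ survives: for $i$ with $\ell \in L(\T_i)$, semi-universality forces $U(i) = \{\ell\}$, so $\ell$ is dropped from $\Desc_i(U')$; for $i$ with $\ell \notin L(\T_i)$, trivially $\ell \notin \Desc_i(U') \subseteq \Desc_i(U) \subseteq L(\T_i)$. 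Thus $\Desc_\P(U') \subseteq \Desc_\P(U) \setminus S$. Conversely, given $\ell' \in \Desc_\P(U) \setminus S$, pick $i$ with $\ell' \in \Desc_i(U)$; since $\ell' \notin S$, removing a member of $S$ from $\Desc_i(U)$ cannot remove $\ell'$, so $\ell' \in \Desc_i(U') \subseteq \Desc_\P(U')$. Finally, $S \subseteq \Desc_\P(U)$ because each $\ell \in S$ lies in $\Desc_i(\{\ell\}) = \Desc_i(U)$ for any $\T_i$ containing $\ell$.

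The last step is to confirm that the described operations produce this graph. Deleting a vertex already deletes its incident edges, so carrying out, for each $\ell \in S$, step~(1) (delete the edges from $\ell$ to $\Ch_i(\ell)$ over the relevant $i$) followed by step~(2) (delete $\ell$) removes exactly the vertices of $S$ and all edges touching $S$, and nothing else --- which is precisely the graph identified above with $\DG(U')$. The only subtle point, and the part I expect to require the most care, is the assertion implicit in step~(1): that in $\DG(U)$ the edges from $\ell$ to its children are the \emph{only} edges incident to $\ell$. This is where validity enters. Any neighbour of $\ell$ in $\DG(U)$ lies in $\Desc_\P(U)$ and is, in some $\T_i$ with $\ell \in L(\T_i)$, either a child of $\ell$ (which does lie in $\Desc_i(U) \subseteq \Desc_\P(U)$) or $\parent_{\T_i}(\ell)$; and $\parent_{\T_i}(\ell) \notin \Desc_\P(U)$, for otherwise, by validity condition (V\ref{item:v2}), it would lie in $\Desc_i(U) = \Desc_i(\{\ell\})$, i.e.\ be a descendant of $\ell$ in $\T_i$, a contradiction. (One can also observe that distinct semi-universal labels are pairwise incomparable in every input tree, so the order in which the elements of $S$ are processed does not matter; but this is not strictly needed, since the net effect of the deletions is order-independent anyway.) Everything else --- the union manipulations and the fact that $\Desc_i(U) \subseteq L(\T_i)$ --- is routine.
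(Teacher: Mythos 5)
Your proof is correct. The paper states this as an observation and gives no proof, so there is nothing to diverge from; your argument supplies exactly the verification the paper leaves implicit: the set identity $\Desc_\P(U') = \Desc_\P(U) \setminus S$ obtained tree by tree from the successor definition, plus the key point that condition (V\ref{item:v2}) (together with semi-universality, which forces $U(i)=\{\ell\}$ for \emph{every} $i$ with $\ell\in L(\T_i)$) guarantees that the parent of $\ell$ in any $\T_i$ is not in $\Desc_\P(U)$, so the child edges listed in step (1) are the only edges incident on $\ell$ in $\DG(U)$.
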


Let $U$ be a valid position, and $W$ be a subset of $\Desc_\P(U)$.  Then, $U | W$ denotes the position $(U(1) \cap W, \dots , U(k) \cap W)$.
In Figure \ref{fig:displayGraph}, the components of $\DG(U')$, where $U'$ is the successor of $\Uinit$, are $W_1 = \{3,4,a,b,c,d,e,g\}$ and $W_2 = \{f,h,i\}$.  Thus, $U' | W_1 = (\{3\}, \{e,g\}, \{g\})$ and $U' | W_2 = (\{f\}, \{f\}, \emptyset)$.  We have the following result.

\begin{lemma}\label{lm:childOfU}
Let $U$ be a valid position, and $S$ be the set of all semi-universal labels in $U$.  Let $U'$ be the successor of $U$ with respect to $S$, and let $W_1, W_2, \dots, W_p$ be the label sets of the connected components of $\DG(U')$.  Then, $U' | W_j$ is a valid position, for each $j \in [p]$. 
\end{lemma}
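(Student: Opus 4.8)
The plan is to verify conditions (V\ref{item:v1}) and (V\ref{item:v2}) directly for each position $U' | W_j$, using what we know about $U$, the structure of the successor operation (Observation~\ref{obs:update2}), and the fact that the $W_j$ are the label sets of the connected components of $\DG(U')$. First I would unwind the definition: for a fixed $j$ and $i \in [k]$, the $i$th component of $U' | W_j$ is $U'(i) \cap W_j$, and by the definition of successor, $U'(i)$ is either $U(i)$ (when $U(i)$ is not a singleton equal to a semi-universal label) or $\Ch_i(\ell)$ for the unique semi-universal $\ell \in S$ with $U(i) = \{\ell\}$.

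For (V\ref{item:v1}): suppose $|U'(i) \cap W_j| \ge 2$. In the first case $U'(i) = U(i)$ has at least two elements, so by (V\ref{item:v1}) for $U$ its elements are siblings in $\T_i$; a subset of a sibling set is still a sibling set, so $U'(i) \cap W_j$ works. In the second case $U'(i) \subseteq \Ch_i(\ell)$ for a single label $\ell$, and the children of a single node are automatically siblings in $\T_i$; again any subset is fine. So (V\ref{item:v1}) holds in both cases.

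The substantive part is (V\ref{item:v2}), which I expect to be the main obstacle. I need $\Desc_i(U'|W_j) = \Desc_\P(U'|W_j) \cap L(\T_i)$ for every $i$. The ``$\subseteq$'' direction is immediate from the definition of $\Desc_\P$. For ``$\supseteq$'', the key claim is that $\Desc_\P(U'|W_j) = W_j$, i.e. the set of labels reachable as descendants, across all trees, from the starting labels $U'(i)\cap W_j$ is exactly the connected component $W_j$ of $\DG(U')$. To see this I would argue in two steps. First, $\Desc_\P(U'|W_j) \subseteq W_j$: by Observation~\ref{obs:update2}, $\DG(U')$ is obtained from $\DG(U)$ by deleting the semi-universal labels in $S$ together with the relevant child-edges; one checks that, for a valid position, $\Desc_\P(U')$ equals $\Desc_\P(U)$ minus the semi-universal labels deleted (each deleted $\ell$ is semi-universal, so in every tree containing $\ell$ we have $U(i)=\{\ell\}$ and we replace $\ell$ by its children, which remain; labels that were descendants only through $\ell$ are still descendants through $\Ch_i(\ell)$), so $V(\DG(U')) = \Desc_\P(U')$; and starting from labels all lying in the single component $W_j$, every label reached in any $\T_i$ by a descendant step corresponds to traversing an edge of $\DG(U')$, hence stays inside $W_j$. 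Second, $W_j \subseteq \Desc_\P(U'|W_j)$: every label in $W_j$ lies in some $\T_i$, and within $\T_i$ the labels of $W_j$ that appear form a union of subtrees; because $W_j$ is connected in $\DG(U')$ and the only ``entry points'' into a tree's portion of $\DG(U')$ are the roots of those subtrees, each such subtree root must either lie in $U'(i)\cap W_j$ or be reachable from $U'(i) \cap W_j$ via a path in $\DG(U')$ that alternates through other trees — tracing that path back, connectivity forces every subtree root of $\T_i$ restricted to $W_j$ to be a descendant (in $\P$) of some element of $U'(i')\cap W_j$. Once $\Desc_\P(U'|W_j)=W_j$ is established, apply this with the single tree $\T_i$: $\Desc_i(U'|W_j)$ is the set of labels of $\T_i$ lying in the subtrees rooted at the elements of $U'(i)\cap W_j$, which by the connectivity argument is exactly $W_j \cap L(\T_i) = \Desc_\P(U'|W_j)\cap L(\T_i)$, giving (V\ref{item:v2}). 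The delicate point — and where I would spend the most care — is the claim that the portion of $\DG(U')$ inside a single tree $\T_i$ decomposes as subtrees whose roots are exactly the elements of $U'(i)$ restricted appropriately; this is where (V\ref{item:v2}) for $U$ itself (applied to $\DG(U)$) and the precise effect of the successor step get used, and it is easy to be sloppy about labels shared between trees.
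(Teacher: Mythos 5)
Your overall route is the same as the paper's: verify (V\ref{item:v1}) and (V\ref{item:v2}) using the effect of the successor operation on descendant sets together with the fact that descendant paths stay inside a connected component of $\DG(U')$. (The paper first shows $U'$ itself is valid and then restricts to the label-disjoint components; you work directly with $U'|W_j$, which is only a reorganization.) Your (V\ref{item:v1}) argument, the inclusion $\Desc_\P(U'|W_j)\subseteq W_j$, and the identity $\Desc_\P(U')=\Desc_\P(U)\setminus S$ (hence $V(\DG(U'))=\Desc_\P(U')$) are all fine.

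The gap is exactly in the step you flag as delicate, and the justification you sketch for it does not work as stated: connectivity of $W_j$ in $\DG(U')$ cannot by itself ``force'' a label of $W_j\cap L(\T_i)$ to be a descendant of an element of $U'(i)\cap W_j$ --- an edge of $\DG(U')$ entering the $\T_i$-portion of $W_j$ only witnesses adjacency in some tree, not descendancy from $U'(i)$. What is needed is precisely condition (V\ref{item:v2}) for $U'$ in per-tree form, which your global identity stops short of and which is the substance of the paper's proof: at most one label of $S$ lies in any $L(\T_i)$ (if $\ell,\ell'\in S\cap L(\T_i)$ then semi-universality gives $U(i)=\{\ell\}=\{\ell'\}$), and for such an $i$ one has $\Desc_i(U')=\Desc_i(U)\setminus\{\ell\}$ and $\Desc_\P(U')\cap L(\T_i)=(\Desc_\P(U)\cap L(\T_i))\setminus\{\ell\}$, all other entries being unchanged; combined with (V\ref{item:v2}) for $U$ this yields (V\ref{item:v2}) for $U'$. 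With that in hand your final step closes cleanly and the ``entry points''/``alternating paths'' machinery is unnecessary: any $\beta\in W_j\cap L(\T_i)\subseteq\Desc_\P(U')\cap L(\T_i)=\Desc_i(U')$ lies below some $m\in U'(i)$ in $\T_i$; the tree path from $m$ to $\beta$ lies in $\Desc_i(U')\subseteq V(\DG(U'))$, so $m\in W_j$ and hence $\beta\in\Desc_i(U'|W_j)$. Supply that intermediate lemma (validity of $U'$) and your argument becomes a correct, slightly more detailed version of the paper's proof.
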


\begin{proof}
It suffices to argue that $U'$ satisfies conditions (V\ref{item:v1}) and  (V\ref{item:v2}).  The lemma then follows from  the fact that the connected components of $\DG(U')$ are label-disjoint.

$U'$ must satisfy condition (V\ref{item:v1}), since $U$ does. Suppose $\ell \in S$.  Then, for each $i \in [k]$ such that $\ell \in L(\T_i)$, $\Desc_i(U') = \Desc_i(U) \setminus \{\ell\}$ and $ \Desc_\P(U') \cap L(\T_i) =(\Desc_\P(U) \cap L(\T_i)) \setminus \{\ell\}$.  Thus, since (V\ref{item:v2}) holds for $U$, it also holds for $U'$.
\end{proof}

\section{Testing Ancestral Compatibility}\label{sec:testAC}

\SLBuild\ (Algorithm \ref{alg:SLBuild}) is our algorithm for testing compatibility of semi-labeled trees.  Its argument, $U$, is a valid position in $\P$ such that $H_\P(U)$ is connected.  Line \ref{alg:findSemi} computes the set $S$ of semi-universal labels in $U$.  If $S$ is empty, then, as argued in Theorem \ref{thm:buildCorrect} below, $\P|\Desc_\P(U)$ is incompatible, and, thus, so is $\P$.  This fact is reported in Line \ref{alg:S0emptyIncompat}.  Line \ref{alg:S0singleton} checks if $S$ contains exactly one label $\ell$, with no proper descendants.  If so, by the connectivity assumption, $\ell$ must be the only element in $\Desc_\P(U)$.  Therefore, Line \ref{alg:S0singletonReturn} simply returns the tree with a single node, labeled $\ell$. Line \ref{alg:updateU} updates $U$, replacing it by its successor with respect to $S$.  Let $W_1, \dots, W_p$ be the connected components of $\DG(U)$ after updating $U$.
By Lemma \ref{lm:childOfU}, $U | W_j$ is a valid position, for each $j \in [p]$. 
Lines \ref{alg:connected}--\ref{alg:recurseEnd} recursively invoke $\SLBuild$ on  $U | W_j$ for each $j \in [p]$, to determine if there is a tree $t_j$ that ancestrally displays $\P| \Desc_\P(U \cap W_j)$.  If any subproblem is incompatible, Line \ref{alg:recurseEnd} reports that $\P$ is incompatible.  Otherwise, Lines \ref{alg:rU}--\ref{alg:finalReturn} assemble the $t_j$s into a single tree that displays $\P|\Desc_\P(U)$, whose root is labeled by the semi-universal labels in the set $S$ of Line \ref{alg:findSemi}.

\begin{algorithm}[t]
\SetAlgoLined
\SetNoFillComment
\DontPrintSemicolon
\KwIn{A valid position $U$ for $\P$ such that $\DG(U)$ is connected.}
\KwOut{A semi-labeled tree that ancestrally displays $\P' = \P|\Desc_\P(U)$, if $\P'$ is ancestrally compatible; \texttt{incompatible} otherwise.
}
Let $S = \{\ell \in U: \ell \text{ is semi-universal in } U\}$ \label{alg:findSemi} \;
\If{$S = \emptyset$\label{alg:S0empty}}
{
	\Return \incompatible\label{alg:S0emptyIncompat}\;
}
\If{$|S| = 1$ and the single element, $\ell$, of $S$ has no proper descendants\label{alg:S0singleton}}
{
	\Return the tree consisting of exactly one node, whose label set is $\{\ell\}$\label{alg:S0singletonReturn}
}
Replace $U$ by the successor of $U$ with respect to $S$.\label{alg:updateU} \; 

Let $W_1, W_2, \dots, W_p$ be the connected components of $\DG(U)$ \label{alg:connected}

\ForEach{$j \in [p]$ \label{alg:recurseBegin}}
{
	Let $t_j = \SLBuild(U | W_j)$\; \label{alg:recurse}
	\If{$t_j$ is not a tree}
	{
		\Return \incompatible \label{alg:recurseEnd}
	}
}
Create a node $r_U$, whose label set is $S$ \label{alg:rU}\;
\Return the tree with root $r_U$ and subtrees $t_1, \dots , t_p$  \label{alg:finalReturn}\;
\caption{\SLBuild$(U)$}\label{alg:SLBuild}
\end{algorithm}

Next, we argue the correctness of $\SLBuild$.

%

\begin{theorem}\label{thm:buildCorrect}
Let $\P = \{\T_1, \dots , \T_k\}$ be a profile and let $\Uinit = (\Uinit(1), \dots, \Uinit(k))$, where, for each $i \in [k]$, $\Uinit(i) = \phi_i^{-1}(r(T_i))$.  Then, $\SLBuild(\Uinit)$
returns either
(i) a semi-labeled tree $\T$ that ancestrally displays $\P$, if $\P$ is ancestrally compatible, or (ii) \texttt{incompatible} otherwise.
\end{theorem}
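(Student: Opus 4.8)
The plan is to prove correctness of $\SLBuild$ by induction on $|\Desc_\P(U)|$, proving the more general statement that for \emph{any} valid position $U$ with $\DG(U)$ connected, $\SLBuild(U)$ returns a semi-labeled tree that ancestrally displays $\P' = \P|\Desc_\P(U)$ if $\P'$ is ancestrally compatible, and \texttt{incompatible} otherwise; Theorem~\ref{thm:buildCorrect} then follows since $\Uinit$ is valid, $\DG(\Uinit) = \DG$ is connected (by our standing assumption), and $\Desc_\P(\Uinit) = L(\P)$, so $\P' = \P$. The base cases are when $S = \emptyset$ and when $|S| = 1$ with the single label having no proper descendants; I would handle the latter first, observing that connectivity of $\DG(U)$ forces $\Desc_\P(U) = \{\ell\}$, so $\P'$ is the (trivially compatible) profile of one-node trees, correctly returned in Line~\ref{alg:S0singletonReturn}.

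The crux is the case $S \neq \emptyset$ and not a trivial singleton. Here I would establish two directions. \emph{(Soundness)} Assuming every recursive call $\SLBuild(U|W_j)$ returns a tree $t_j$ ancestrally displaying $\P|\Desc_\P(U|W_j)$ (valid by Lemma~\ref{lm:childOfU}, and with strictly smaller descendant sets so the induction hypothesis applies), I must show the tree $\T$ built in Lines~\ref{alg:rU}--\ref{alg:finalReturn} ancestrally displays $\P' = \P|\Desc_\P(U)$. Using Lemma~\ref{lm:DN}, it suffices to check $D(\T_i|\Desc_i(U)) \subseteq D(\T)$ and $N(\T_i|\Desc_i(U)) \subseteq N(\T)$ for each $i$; the non-trivial pairs are those involving a label $\ell \in S$, and one shows that attaching all $t_j$ under a common root $r_U$ labeled by $S$ makes every such $\ell$ an ancestor of exactly the labels in $\Desc_i(U) \setminus S$ that occur in the $t_j$'s, matching $D$ and $N$ exactly; pairs internal to a single $W_j$ are handled by the inductive hypothesis, and pairs straddling two components $W_j, W_{j'}$ are incomparable in every $\T_i$ (since the $W_j$ are label-disjoint and, by validity, $\Desc_i$ respects components) and are incomparable in $\T$ as well. \emph{(Completeness)} Conversely, suppose $\P'$ is ancestrally compatible, witnessed by some tree $\T^*$; I must show (a) $S \neq \emptyset$, and (b) each recursive subproblem is compatible, so that by induction the algorithm does not return \texttt{incompatible}. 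For (a), I would use the semi-universal property sketched in the text: in $\T^*$, the roots of the minimal subtrees capturing the clusters $X(\phi_i(\ell))$ form an antichain/chain structure whose topmost element must be labeled by a label that is $U(i) = \{\ell\}$ in every tree containing it — i.e.\ semi-universal — hence $S \neq \emptyset$. For (b), I would invoke Lemma~\ref{lm:compatSubprofile}: restricting $\T^*$ appropriately, and using that after the update $\Desc_\P(U) = \bigsqcup_j \Desc_\P(U|W_j) \,\cup\, S$ with the $W_j$ being exactly the connected components, the restriction of (a suitable contraction of) $\T^*$ to each $\Desc_\P(U|W_j)$ witnesses compatibility of that subprofile.

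The main obstacle I anticipate is the completeness direction, specifically showing $S \neq \emptyset$ whenever $\P'$ is compatible and pinning down precisely why the connected components of the updated $\DG(U)$ correspond to the subtrees hanging below $r_U$ in \emph{every} displaying tree $\T^*$ — this requires carefully translating the cluster-containment definition of ``ancestrally displays'' (via Lemma~\ref{lm:DN}) into statements about where semi-universal labels must sit in $\T^*$, and arguing that two labels in different components cannot be comparable in $\T^*$ without violating validity condition (V\ref{item:v2}) or creating a cluster not present in any input tree. I would isolate this as a separate lemma about semi-universal labels and the structure of any tree displaying $\P|\Desc_\P(U)$, proved using Lemma~\ref{lm:DN} together with the characterization of a semi-labeled tree by its cluster set. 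The soundness direction and the bookkeeping about $\DG(U')$ (Observations~\ref{obs:update1} and~\ref{obs:update2}) should be comparatively routine once that lemma is in hand.
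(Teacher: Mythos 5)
Your high-level plan is legitimate, and it is worth noting that it is organized differently from the paper's proof, which is not an induction at all: there, soundness is argued globally, by taking an arbitrary pair in $D(\T_i)$ or $N(\T_i)$ and examining the recursive call at which the relevant label(s) become semi-universal, and completeness is a contradiction argument localized at the single failing call where $S=\emptyset$. Your inductive soundness step and your subproblem-compatibility step (b) do go through essentially as you describe; in fact (b) needs less machinery than you anticipate, because $\Desc_\P(U'|W_j)=W_j\subseteq\Desc_\P(U)$, so Lemma~\ref{lm:compatSubprofile} applied to a displaying tree $\T^*$ and the label set $W_j$ already yields compatibility of the $j$th subproblem --- you never need to show that the components $W_j$ coincide with the subtrees hanging below the root of every displaying tree.

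The genuine gap is precisely the step you flag and then leave as a sketch: that $S\neq\emptyset$ whenever $\P|\Desc_\P(U)$ is ancestrally compatible (equivalently, that returning \incompatible\ when $S=\emptyset$ is correct --- this is also your $S=\emptyset$ base case, so without it the completeness half of the induction has no anchor). Your sketch (``roots of the minimal subtrees capturing the clusters $X(\phi_i(\ell))$ form an antichain/chain structure whose topmost element must be labeled by a label with $U(i)=\{\ell\}$'') does not say why the relevant node of $\T^*$ carries a label at all, nor why that label $\ell$ forces $U(i)=\{\ell\}$ for \emph{every} $i$ with $\ell\in L(\T_i)$ rather than merely $\ell\in U(i)$; naming Lemma~\ref{lm:DN} and the cluster characterization identifies tools, not an argument. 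The paper closes exactly this point by restricting the witness: by Lemma~\ref{lm:compatSubprofile}, $\T^*|\Desc_\P(U)$ ancestrally displays $\P|\Desc_\P(U)$, which by Lemma~\ref{lem:V2} is $\{\T_1|\Desc_1(U),\dots,\T_k|\Desc_k(U)\}$; then any label $\ell$ in the label set of the root of $\T^*|\Desc_\P(U)$ must be the label of the root of $\T_i|\Desc_i(U)$ for each $i$ with $\ell\in L(\T_i)$, and validity of $U$ together with the trees being fully and singularly labeled then forces $U(i)=\{\ell\}$ for each such $i$, i.e.\ $\ell$ is semi-universal. Until you supply an argument of this kind (isolated as the lemma you propose), the completeness direction of your proof is unsupported, while the rest of your outline is sound.
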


\begin{proof}
(i) Suppose that $\SLBuild(\Uinit)$ outputs a semi-labeled tree $\T$.  We prove that $\T$ ancestrally displays $\P$. By Lemma \ref{lm:DN}, it suffices to show that $D(\T_i) \subseteq D(\T)$ and  $N(\T_i) \subseteq N(\T)$, for each $i \in [k]$.

Consider any $(\ell,\ell') \in D(\T_i)$.  Then, $\ell$ has a child $\ell''$ in $\T_i$ such that $\ell'' \le_{\T_i} \ell'$.  There must be a recursive call to $\SLBuild(U)$, for some valid position $U$, where $\ell$ is the set $S$ of semi-universal labels obtained in Line \ref{alg:findSemi}.  By Observation \ref{obs:update2}, label $\ell''$, and thus $\ell'$, both lie in one of the connected components of the graph obtained by deleting all labels in $S$, including $\ell$, and their incident edges from $\DG(U)$. It now follows from the construction of $\T$ that $(\ell, \ell') \in D(\T)$.  Thus, $D(\T_i) \subseteq D(\T)$.

Now, consider any $\{\ell,\ell' \} \in N(\T_i)$.  Let $v$ be the lowest common ancestor of $\phi_i(\ell)$ and $\phi_i(\ell')$ in $\T_i$ and let $\ell_v$ be the label of $v$. Then, $\ell_v$ has a pair of children, $\ell_1$ and $\ell_2$ say, in $\T_i$ such that $\ell_1 \le_{\T_i} \ell$, and $\ell_2 \le_{\T_i} \ell'$. Because $\SLBuild(\Uinit)$ returns a tree, there are recursive calls $\SLBuild(U_1)$ and $\SLBuild(U_2)$ for valid positions $U_1$ and $U_2$ such that $\ell_1$ is semi-universal for $U_1$ and $\ell_2$ is semi-universal for $U_2$. We must have $U_1 \neq U_2$; otherwise, $|U_1(i)| = |U_2(i)| \ge 2$, and, thus, neither $\ell_1$ nor $\ell_2$ is semi-universal, a contradiction. Further, it follows from the construction of $\T$ that we must have $\Desc_\P(U_1) \cap \Desc_\P(U_2) = \emptyset$.  Hence, $\ell \parallel_\T \ell'$, and, therefore, $\{\ell,\ell'\} \in N(\T)$.

(ii) Asssume, by way of contradiction, that $\SLBuild(\Uinit)$ returns \texttt{incompatible}, but that $\P$ is ancestrally compatible.  By assumption, there exists a semi-labeled  tree $\T$ that ancestrally displays $\P$.  Since $\SLBuild(\Uinit)$ returns  \texttt{incompatible}, there is a recursive call to $\SLBuild(U)$ for some valid position $U$ such that $U$ has no semi-universal label, and the set $S$ of Line \ref{alg:findSemi} is empty. 

By Lemma~\ref{lm:compatSubprofile}, $\T | \Desc_\P(U)$ ancestrally displays $\P|\Desc_\P(U)$.  Thus, by Lemma \ref{lem:V2}, $\T | \Desc_\P(U)$ ancestrally displays $\T_i | \Desc_i(U)$, for every $i \in [k]$.  Let
$\ell$ be any label in the label set of the root of $\T | \Desc_\P(U)$.  Then, for each $i \in [k]$ such that $\ell \in L(\T_{i})$, $\ell$ must be the label of the root of $\T_i|\Desc_i(U)$.
Thus, for each such $i$, $U(i) = \{\ell\}$.  Hence, $\ell$ is semi-universal in $U$, a contradiction.
\end{proof}

\section{Implementation}\label{sec:implementation}

Here we describe an efficient implementation of $\SLBuild$.  We focus on two key aspects: finding semi-universal labels in Line \ref{alg:findSemi}, and updating $U$ and $\DG(U)$ in Lines \ref{alg:updateU} and \ref{alg:connected}.  

By Observation \ref{obs:update1}, at each recursive call, $\SLBuild$ deals with a graph obtained from $\DG$ through edge and node deletions.  To handle these deletions efficiently, we represent $\DG$ using the dynamic graph connectivity data structure of  Holm et al.\ \cite{HolmLichtenbergThorup:2001}, which we refer to as \emph{HDT}.  HDT allows us to maintain the list of nodes in each component, as well as the number of these nodes so that, if we start with no edges in a graph with $N$ nodes, the amortized cost of each update is $O(\log^2 N)$.  Since $\DG$ has $O(\MP)$ nodes, each update takes $O(\log^2 \MP)$ time.  
The total number of edge and node deletions performed by $\SLBuild(\Uinit)$ --- including all deletions in the recursive calls --- is at most the total number of edges and nodes in $\DG$, which is $O(\MP)$.  HDT allows us to maintain connectivity information throughout the entire algorithm in $O(\MP \log^2 \MP)$ time.

As deletions are performed on $\DG$, $\SLBuild$ maintains three data fields for each connected component $Y$ that is created: $Y.\WEIGHT$, $Y.\MAP$, and $Y.
\SEMI$.  It also maintains a field $\ell.\cnt$, for each $\ell \in L(\P)$.  
\vspace{-0.1cm}
\begin{enumerate}
\vspace{-0.5\parskip}
 \itemsep1pt \parskip0pt \parsep0pt
\item
$Y.\WEIGHT$ equals $\sum_{\ell \in Y} k_\ell$.
\item
$Y.\MAP$ is a map from a set $J_Y \subseteq [k]$ to a set of nonempty subsets of $Y \cap L(T_i)$.  For each $i \in J_Y$, $Y.\MAP(i)$ denotes the set associated with $i$.
\item
$\ell.\cnt$ equals the cardinality of the set
$\{i \in [k] : Y.\MAP(i) \text{ is defined and } Y.\MAP(i) = \{\ell\}\}$. (Recall that $k_\ell$ is the number of input trees that contain $\ell$.)
\item
$Y.\SEMI$ is a set containing all labels $\ell \in Y$ such that $\ell.\cnt = k_\ell$.   
\end{enumerate}
\vspace{-0.1cm}
Informally, each set $Y.\MAP(i)$ corresponds to a non-empty $U(i)$; $Y.\SEMI$ corresponds to the semi-universal labels in $Y$.  Next, we formalize these ideas.

At the start of the execution of $\SLBuild(U)$ for any valid position $U$, $\DG(U)$ has a single connected component, $Y_U = \Desc_\P(U)$.  Our implementation maintains the following invariant. 
\begin{description}
\item[INV:]
At the beginning of the execution of $\SLBuild(U)$, $Y_U.\MAP(i) = U(i)$ for each $i \in [k]$ such that $U(i) \neq \emptyset$, and $Y_U.\MAP(i)$ is undefined for each $i \in [k]$ such that $U(i) = \emptyset$.
\end{description}
Thus, $\ell.\cnt$ equals the number of indices $i \in [k]$ such that $U(i) = \{\ell\}$.  Along with Lemma \ref{lem:semiU}, INV implies that, at the beginning of the execution of $\SLBuild(U)$, $Y_U.\SEMI$ contains precisely the semi-universal labels of $U$.
Thus, the set $S$ of line \ref{alg:findSemi} of $\SLBuild(U)$ can be retrieved in $O(1)$ time.

To establish INV for the initial valid position $\Uinit$, we proceed as follows.  By assumption, $\DG(\Uinit)$ has a single connected component, $\Yinit =  L(\P)$.  
Since $\DG(\Uinit)$ equals $\DG$, we initialize data fields 1--4 for $\Yinit$ during the construction of $\DG$. 
$\Yinit.\WEIGHT$ is simply $\sum_{\ell \in L(\P)} k_\ell$. For each $i \in [k]$, $\Yinit.\MAP(i)$ is $\{\ell\}$, where $\ell$ is the label of the root of $T_i$.  We initialize the \cnt\ fields as follows.  
First, set $\ell.\cnt$ to $0$ for all $\ell \in L(\P)$.  Then,  iterate through each $i \in [k]$, incrementing $\ell.\cnt$ by one if $\Yinit.\MAP(i) = \{\ell\}$.  Finally, $\Yinit.\SEMI$ consists of all $\ell \in \Uinit$ such that $\ell.\cnt = k_\ell$.  All data fields can be initialized in $O(\MP)$ time.

We now focus on Lines \ref{alg:updateU} and \ref{alg:connected} of $\SLBuild$. By Observation \ref{obs:update2}, we can update $U$ and $\DG(U)$ jointly as follows. We use a temporary variable $\GBNT$.  Prior to executing Line \ref{alg:updateU}, we set $\GBNT = \DG(U)$.
Then, we successively consider each label $\ell \in S$, and perform two steps: (i) initialize data fields 1--4 in preparation for the deletion of $\ell$ and (ii) delete from $\GBNT$ the edges incident on $\ell$ and then $\ell$ itself, updating data fields 1--4 as necessary, to maintain INV.  After these steps are executed, $\GBNT$ will equal $\DG(U)$ for the new set $U$ created by Line \ref{alg:updateU}.  Steps (i) and (ii) are done by $\InitLists(\ell)$ (Algorithm \ref{alg:initMarks}) and $\DelLabel(\ell)$ (Algorithm \ref{alg:delLabel}), respectively. 

Lines \ref{alg:delSemi}--\ref{alg:addPairMap} of $\InitLists(\ell)$ initialize $Y.\MAP$ and $Y.\SEMI$ to reflect the fact that label $\ell \in S$ is leaving $U(i)$, for each $i \in [k]$ such that $i \in L(\T_i)$, to be replaced by its children in $\T_i$, and will no longer be semi-universal.  Lines \ref{alg:initSingleton}--\ref{alg:forAddAlphaSemi} are needed to update certain $\cnt$ fields due to the possibility that singleton sets $Y.\MAP(i)$ may be created in the preceding steps.  The number of operations on $Y.\MAP$ performed by $\InitLists(\ell)$ is $O(\sum_{i \in [k] : \ell \in L(\T_i)} |\Ch_i(\ell)|)$; i.e., it is proportional to the total number of children of $\ell$ in all the input trees.  Since $\ell$ is considered only once, the total number of operations on $\MAP$ fields of the various sets $Y$ considered over the entire execution of $\SLBuild(\Uinit)$ is $O(\MP)$.  The number of updates of $Y.\MAP$ done by $\InitLists(\ell)$ is at most $k_\ell$; the total over all $\ell$ considered by $\SLBuild(\Uinit)$ over its entire execution is $O(\MP)$.

\begin{algorithm}
\SetAlgoLined\SetNoFillComment

\DontPrintSemicolon
	Delete $\ell$ from $Y.\SEMI$ \label{alg:delSemi}\;
	\ForEach {$i \in [k]$ such that $\ell \in L(\T_i)$\label{alg:forEachIContainingL}}
	{
		Delete $Y.\MAP(i)$ \;
		\ForEach {$\alpha \in \Ch_i(\ell) $\label{alg:forEachAlpha}}
		{
			Add $\alpha$ to $Y.\MAP(i)$ \label{alg:addPairMap} \; 
		}
		\If{$Y.\MAP(i)$ is a singleton\label{alg:initSingleton}}
		{
			Let $\beta$ be the single element in $Y.\MAP(i)$ \;
			Set $\beta.\cnt = \beta.\cnt +1$ \;
			\lIf{$\beta.\cnt = k_{\beta}$}
			{
				add $\beta$ to $Y.\SEMI$\label{alg:forAddAlphaSemi}
			}
		}
	}

\caption{\InitLists$(\ell)$}\label{alg:initMarks}
\end{algorithm}

$\DelLabel(\ell)$ begins 
by consulting HDT to identify the connected component $Y$ that currently contains $\ell$.  The loop in Lines \ref{alg:forScanEdges}--\ref{alg:endForScanEdges} successively deletes each edge between $\ell$ and a child $\alpha$ of $\ell$, updating the appropriate data fields for the resulting connected components.  Line \ref{alg:testConn} queries the HDT data structure to determine whether deleting $(\ell,\alpha)$ splits $Y$ into two components. If $Y$ remains connected, no updates are needed --- the \textbf{continue} statement skips the rest of the current iteration and proceeds directly to the next.  Otherwise, $Y$ is split into two parts $Y_1$ and $Y_2$.  $\DelLabel$ uses a weighted version of the technique of scanning the smaller component \cite{EvenShiloach:1981}.  Line \ref{alg:smallerWeight} identifies which of the two components has the smaller \WEIGHT\ field; without loss of generality, it assumes that $Y_1.\WEIGHT \le Y_2.\WEIGHT$.  Lines \ref{alg:delInitY1} and \ref{alg:delInitY2} initialize $Y_1.\MAP$ and $Y_1.\SEMI$ to \NULL\ and $Y_2.\MAP$ and $Y_2.\SEMI$ to the corresponding fields of $Y$.   Lines \ref{alg:forBetaY1}--\ref{alg:addToJ}, scan each label $\beta$ in $Y_1$, from $Y_2.\MAP(i)$ to $Y_1.\MAP(i)$, for every $i$ such that $\beta \in L(\T_i)$.
Set $J$, updated in Line \ref{alg:addToJ}, keeps track of the indices $i$ such that $Y_1.\MAP(i)$ and $Y_2.\MAP(i)$ are modified.  Lines \ref{alg:forUpdateSemi}--\ref{alg:endForScanEdges} iterate through $J$ to determine if any new singleton sets were created in either $Y_1$ or $Y_2$.  
This operation requires at most one update in each of $Y_1.\SEMI$ and $Y_2.\SEMI$; each update takes $O(1)$ time.   After all edges incident on $\ell$ are deleted, $\ell$ itself is deleted (Line \ref{alg:delEll}).

The preceding description of $\DelLabel(\ell)$ omits the updating of the $\WEIGHT$ fields of the connected components created by an edge deletion. This is done before Line \ref{alg:smallerWeight}, by (again) using the technique of scanning the smaller component.  We consult HDT to determine which of $Y_1$ and $Y_2$ has fewer labels.  Assuming, without loss of generality, that $|Y_1| < |Y_2|$, compute $Y_1.\WEIGHT$ in a sequential scan of $Y_1$.  Then, $Y_2.\WEIGHT = Y.\WEIGHT - Y_1.\WEIGHT$.

\begin{algorithm}[t]
\SetAlgoLined\SetNoFillComment

\DontPrintSemicolon
	Let $Y$ be the connected component of $\GBNT$ that contains $\ell$ \label{alg:findY}\;
	\ForEach {$\alpha \in \Ch(\ell)$\label{alg:forScanEdges}}
	{
		Delete edge $\{\ell, \alpha\}$ from $\GBNT$ \label{alg:deleteEdge}\;
		\lIf{$Y$ remains connected\label{alg:testConn}}{\continue}
		Let $Y_1, Y_2$ be the connected components of $\GBNT$; assume that $Y_1.\WEIGHT \le Y_2.\WEIGHT$ \label{alg:smallerWeight} \;
		Let $Y_1.\MAP = \NULL$ and $Y_1.\SEMI = \NULL$ \label{alg:delInitY1}\;
		Let $Y_2.\MAP = Y.\MAP$ and $Y_2.\SEMI = Y.\SEMI$ \label{alg:delInitY2}\;
		Let $J = \emptyset$ \;
		\ForEach {$\beta \in Y_1$\label{alg:forBetaY1}}
		{
			\ForEach{$i \in [k]$ such that $\beta \in L(\T_i)$}
			{
					Move $\beta$ from $Y_2.\MAP(i)$ to $Y_1.\MAP(i)$ \label{alg:MoveBeta} \;
					$J = J \cup \{i\}$ \label{alg:addToJ}
			}
		}
		\ForEach {$i \in J$\label{alg:forUpdateSemi}}
		{
			\ForEach{$j \in \{1,2\}$}
			{
				\If{$Y_j.\MAP(i) = \emptyset$}
				{
					Delete $Y_j.\MAP(i)$
				}
				\ElseIf{$Y_j.\MAP(i)$ is a singleton}
				{
					Let $\gamma$ be the single element in $Y_j.\MAP(i)$ \;
					$\gamma.\cnt = \gamma.\cnt + 1$ \;
					\lIf{$\gamma.\cnt = k_\gamma$}
					{
						add $\gamma$ to $Y_j.\SEMI$ \label{alg:endForScanEdges}
					}
				}
			}
		}
	}
	Delete $\ell$ from $\GBNT$ \label{alg:delEll}
\caption{$\DelLabel(\ell)$}\label{alg:delLabel}
\end{algorithm}

Let us track the number of operations on $\MAP$ fields in Lines \ref{alg:forBetaY1}--\ref{alg:addToJ} of $\DelLabel(\ell)$ that can be attributed to some specific label $\beta \in L(\P)$ over the entire execution of $\SLBuild(\Uinit)$.  
Each execution of Lines \ref{alg:forBetaY1}--\ref{alg:addToJ} for $\beta$ performs $k_\beta$ operations on $\MAP$ fields.
Let $w_r(\beta)$ be the weight of the connected component containing $\beta$ at the beginning of the loop of Lines \ref{alg:forBetaY1}--\ref{alg:addToJ}, at the $r$th time that $\beta$ is considered in those lines; thus, $w_0(\beta) \le \sum_{\ell \in L(\P)} k_\ell$.  Then, $w_r(\beta) \le w_0(\beta)/2^r$.  The reason is that we only consider $\beta$ if (i) $\beta$ is contained in one of the two components that result from
deleting an edge in Line \ref{alg:deleteEdge} and (ii) the component containing $\beta$ has the smaller weight of the two components.  Thus, the number of times $\beta$ is considered in Lines \ref{alg:forBetaY1}--\ref{alg:addToJ}  over the entire execution of $\SLBuild(\Uinit)$ is $O(\log  w_0(\beta))$, which is $O(\log \MP)$, since $w_0(\beta) = O(\MP)$. Therefore, the total number of updates of $\MAP$ fields over all labels is $O(\log \MP \cdot \sum_{\ell \in L(\P)} k_\ell)$, which is $O(\MP \log \MP)$. 
It can be verified that the number of updates to \cnt\ and \SEMI\ fields is also $O(\MP \log \MP)$. 
A similar analysis shows that the total time to update $\WEIGHT$ fields over all edge deletions performed by  $\SLBuild(\Uinit)$ is $O(\MP \log \MP)$. 

To summarize, the work done by $\SLBuild(\Uinit)$ consists of three parts: (i) initialization, (ii) maintaining connected components, and (iii) maintaining the \WEIGHT, \MAP, and \SEMI, and fields for each connected component, as well as $\ell.\cnt$ for each label $\ell$. Part (i) takes $O(\MP)$ time. Part (ii) involves $O(\MP)$ edge and node deletions on the HDT data structure, at an amortized cost of $O(\log^2 \MP)$ per deletion. Part (iii) requires a total of $O(\MP \log \MP)$ updates to the various fields.  Using data structures that take logarithmic time per update, leads to our main result.  

\begin{theorem}\label{thm:SLBuildA}
\SLBuild\ can be implemented so that \SLBuild$(\Uinit)$ runs in $O(\MP \log^2 \MP)$ time.
\end{theorem}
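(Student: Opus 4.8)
The plan is to account separately for the three kinds of work performed by $\SLBuild(\Uinit)$ --- initialization, connectivity maintenance via HDT, and maintenance of the auxiliary fields $\WEIGHT$, $\MAP$, $\SEMI$, $\cnt$ --- and to argue that each contributes at most $O(\MP \log^2 \MP)$. The running text already sketches each bound; the proof is essentially a matter of assembling these pieces and verifying that nothing else in $\SLBuild$ costs more. The first observation is that the recursion tree of $\SLBuild(\Uinit)$ corresponds exactly, via Observations \ref{obs:update1} and \ref{obs:update2}, to a sequence of edge and node deletions applied to $\DG$: each recursive call operates on the current connected component, and $S$ can be read off in $O(1)$ time from $Y.\SEMI$ because INV holds. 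Hence the total number of deletions, summed over the whole recursion, is bounded by $|E(\DG)| + |V(\DG)| = O(\MP)$, and we never revisit a deleted node or edge.

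Next I would handle the three parts in turn. \textbf{Part (i), initialization:} constructing $\DG$, computing $k_\ell$ for all $\ell$, and initializing $\Yinit.\WEIGHT$, $\Yinit.\MAP$, the $\cnt$ fields, and $\Yinit.\SEMI$ all take $O(\MP)$ time, as argued when INV is established for $\Uinit$. \textbf{Part (ii), connectivity:} we feed all $O(\MP)$ edge/node deletions to the HDT data structure of Holm et al.; starting from an edgeless graph on $N = O(\MP)$ nodes, each update has amortized cost $O(\log^2 N) = O(\log^2 \MP)$, so the total is $O(\MP \log^2 \MP)$. HDT also lets us, at each edge deletion, test connectivity (Line \ref{alg:testConn}) and identify the smaller component by node count, within the same bound. \textbf{Part (iii), auxiliary fields:} here the key is the weight-halving argument already spelled out. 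The work in $\InitLists$ over all $\ell$ is $O(\MP)$ operations on $\MAP$ (proportional to $\sum_i |\Ch_i(\ell)|$ summed over all $\ell$, i.e.\ the number of tree edges) plus $O(\MP)$ $\cnt$/$\SEMI$ updates. The work in $\DelLabel$'s component-splitting loop (Lines \ref{alg:forBetaY1}--\ref{alg:addToJ}) is charged to labels $\beta$: each time $\beta$ is scanned it lies in the lighter of two pieces, so its weight $w_r(\beta)$ satisfies $w_r(\beta) \le w_0(\beta)/2^r$ with $w_0(\beta) = O(\MP)$, giving $O(\log \MP)$ scans of $\beta$ and $O(k_\beta \log \MP)$ $\MAP$-operations attributable to $\beta$; summing, $O(\log \MP \cdot \sum_\ell k_\ell) = O(\MP \log \MP)$. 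The $\cnt$ and $\SEMI$ updates, and the analogous scan that recomputes $\WEIGHT$ of the lighter component (by node count, with the heavier one's weight obtained by subtraction), are bounded the same way. Storing $\MAP$ and $\SEMI$ in balanced search trees or hash structures with $O(\log \MP)$ per operation, Part (iii) costs $O(\MP \log^2 \MP)$ total.

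Adding the three parts yields the $O(\MP \log^2 \MP)$ bound claimed in Theorem \ref{thm:SLBuildA}. The main subtlety --- the one point where the argument is more than bookkeeping --- is justifying that the weight-halving bound genuinely applies: one must check that whenever $\beta$ is re-scanned in Lines \ref{alg:forBetaY1}--\ref{alg:addToJ}, it is because an edge deletion split $\beta$'s component and $\beta$ ended up in the piece of strictly smaller $\WEIGHT$, so that $\sum_{\ell \in Y} k_\ell$ for $\beta$'s component at least halves; this rests on the fact that $Y.\WEIGHT = \sum_{\ell \in Y} k_\ell$ is maintained correctly across splits (Lines \ref{alg:delInitY1}--\ref{alg:delInitY2} and the $\WEIGHT$-update that precedes Line \ref{alg:smallerWeight}), and that $k_\beta \ge 1$ so $\beta$ itself contributes positively to the weight of whichever side it lands on. Everything else --- that INV is preserved by $\InitLists$ and $\DelLabel$, so that $S$ is read correctly at every call --- is carried by the invariant discussion already in the text and by Lemmas \ref{lem:semiU} and \ref{lm:childOfU}.
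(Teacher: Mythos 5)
Your proposal is correct and follows essentially the same route as the paper's own analysis: the same three-part decomposition (initialization, HDT-based connectivity maintenance over $O(\MP)$ deletions at $O(\log^2 \MP)$ amortized cost, and field maintenance bounded by the weight-halving/scan-the-lighter-component argument giving $O(\MP\log\MP)$ updates at logarithmic cost each). No substantive differences from the paper's proof.
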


\section{Discussion\label{sec:discussion}}

Like our earlier algorithm for compatibility of ordinary phylogenetic trees, the more general algorithm presented here, \SLBuild, is a polylogarithmic factor away from optimality (a trivial lower bound is $\Omega(\MP)$, the time to read the input).  \SLBuild\ has a linear-space implementation, using the results of Thorup \cite{Thorup2000}.  A question to be investigated next is the performance of the algorithm on real data.  Another important issue is integrating our algorithm into a synthesis method that deals with incompatible profiles.


\begin{thebibliography}{10}

\bibitem{Aho81a}
A.~V. Aho, Y.~Sagiv, T.~G. Szymanski, and J.~D. Ullman.
\newblock Inferring a tree from lowest common ancestors with an application to
  the optimization of relational expressions.
\newblock {\em SIAM J. Comput.}, 10(3):405--421, 1981.

\bibitem{Baum:1992}
B.~R. Baum.
\newblock Combining trees as a way of combining data sets for phylogenetic
  inference, and the desirability of combining gene trees.
\newblock {\em Taxon}, 41:3--10, 1992.

\bibitem{BerrySemple2006}
V.~Berry and C.~Semple.
\newblock Fast computation of supertrees for compatible phylogenies with nested
  taxa.
\newblock {\em Systematic Biology}, 55(2):270--288, 2006.

\bibitem{BinindaEmonds:Nature:07}
O.~R.~P. Bininda-Emonds, M.~Cardillo, K.~E. Jones, R.~D.~E. MacPhee, R.~M.~D.
  Beck, R.~Grenyer, S.~A. Price, R.~A. Vos, J.~L. Gittleman, and A.~Purvis.
\newblock The delayed rise of present-day mammals.
\newblock {\em Nature}, 446:507--512, 2007.

\bibitem{BordewichEvansSemple2006}
M.~Bordewich, G.~Evans, and C.~Semple.
\newblock Extending the limits of supertree methods.
\newblock {\em Annals of Combinatorics}, 10:31--51, 2006.

\bibitem{BryantLagergren06}
D.~Bryant and J.~Lagergren.
\newblock Compatibility of unrooted phylogenetic trees is {FPT}.
\newblock {\em Theoretical Computer Science}, 351:296--302, 2006.

\bibitem{DanielSemple2004}
P.~Daniel and C.~Semple.
\newblock Supertree algorithms for nested taxa.
\newblock In O.~R.~P. Bininda-Emonds, editor, {\em Phylogenetic supertrees:
  Combining information to reveal the Tree of Life}, pages 151--171. Kluwer,
  Dordrecht, 2004.

\bibitem{DengFB2016}
Y.~Deng and D.~Fern\'andez-Baca.
\newblock Fast compatibility testing for rooted phylogenetic trees.
\newblock In {\em Proceedings of 27th Annual Symposium on Combinatorial Pattern
  Matching}, to appear.

\bibitem{EvenShiloach:1981}
S.~Even and Y.~Shiloach.
\newblock An on-line edge-deletion problem.
\newblock {\em J. ACM}, 28(1):1--4, Jan. 1981.

\bibitem{HenzingerKingWarnow99}
M.~R. Henzinger, V.~King, and T.~Warnow.
\newblock Constructing a tree from homeomorphic subtrees, with applications to
  computational evolutionary biology.
\newblock {\em Algorithmica}, 24:1--13, 1999.

\bibitem{HinchliffPNAS2015}
C.~E. Hinchliff, S.~A. Smith, J.~F. Allman, J.~G. Burleigh, R.~Chaudhary, L.~M.
  Coghill, K.~A. Crandall, J.~Deng, B.~T. Drew, R.~Gazis, K.~Gude, D.~S.
  Hibbett, L.~A. Katz, H.~D. Laughinghouse~IV, E.~J. McTavish, P.~E. Midford,
  C.~L. Owen, R.~H. Reed, J.~A. Reesk, D.~E. Soltis, T.~Williams, and K.~A.
  Cranston.
\newblock Synthesis of phylogeny and taxonomy into a comprehensive tree of
  life.
\newblock {\em Proceedings of the National Academy of Sciences},
  112(41):12764--12769, 2015.

\bibitem{HolmLichtenbergThorup:2001}
J.~Holm, K.~de~Lichtenberg, and M.~Thorup.
\newblock Poly-logarithmic deterministic fully-dynamic algorithms for
  connectivity, minimum spanning tree, 2-edge, and biconnectivity.
\newblock {\em J. ACM}, 48(4):723--760, July 2001.

\bibitem{Page2004}
R.~M. Page.
\newblock Taxonomy, supertrees, and the tree of life.
\newblock In O.~R.~P. Bininda-Emonds, editor, {\em Phylogenetic supertrees:
  Combining information to reveal the Tree of Life}, pages 247--265. Kluwer,
  Dordrecht, 2004.

\bibitem{PeerShamirSharan04}
I.~Pe'er, T.~Pupko, R.~Shamir, and R.~Sharan.
\newblock Incomplete directed perfect phylogeny.
\newblock {\em {SIAM} J. Comput.}, 33(3):590--607, 2004.

\bibitem{Ragan:1992}
M.~A. Ragan.
\newblock Phylogenetic inference based on matrix representation of trees.
\newblock {\em Molecular Phylogenetics and Evolution}, 1:53--58, 1992.

\bibitem{Sanderson:2008}
M.~J. Sanderson.
\newblock Phylogenetic signal in the eukaryotic tree of life.
\newblock {\em Science}, 321(5885):121--123, 2008.

\bibitem{NCBI2009}
E.~W. {Sayers et al.}
\newblock Database resources of the {National Center for Biotechnology
  Information}.
\newblock {\em Nucleic Acids Research}, 37(Database issue):D5--D15, 2009.

\bibitem{SempleSteel03}
C.~Semple and M.~Steel.
\newblock {\em Phylogenetics}.
\newblock Oxford Lecture Series in Mathematics. Oxford University Press,
  Oxford, 2003.

\bibitem{Smith:PloSCB:2013}
S.~A. Smith, J.~W. Brown, and C.~E. Hinchliff.
\newblock Analyzing and synthesizing phylogenies using tree alignment graphs.
\newblock {\em {PLoS Computational Biology}}, 9(9):e1003223, 2013.

\bibitem{Steel92}
M.~A. Steel.
\newblock The complexity of reconstructing trees from qualitative characters
  and subtrees.
\newblock {\em J. Classification}, 9:91--116, 1992.

\bibitem{APG2016}
{The Angiosperm Phylogeny Group}.
\newblock An update of the {Angiosperm Phylogeny Group} classification for the
  orders and families of flowering plants: {APG IV}.
\newblock {\em Botanical Journal of the Linnean Society}, 181:1--20, 2016.

\bibitem{Thorup2000}
M.~Thorup.
\newblock Near-optimal fully-dynamic graph connectivity.
\newblock In {\em Proceedings of the 32nd Annual ACM Symposium on Theory of
  Computing}, pages 343--350. ACM, 2000.

\end{thebibliography}

\end{document}